\newcommand{\VC}[1]{}
\newcommand{\op}[1]{\operatorname{#1}}
\newcommand{\couic}[1]{}
\newcommand{\ports}{\pi}
\newcommand{\port}{\!:\!}
\newcommand{\markedgeng}[1]{\mathcal{X}_{\Sigma#1,\Delta#1,\pi#1}}
\title{Universality of causal graph dynamics}
\author{Simon Martiel and Bruno Martin}
\institute{Univ. Nice Sophia Antipolis, I3S-CNRS, UMR 7271, BP121, F-06903 Sophia Antipolis}
\begin{document}

\maketitle

\begin{abstract}
Causal Graph Dynamics generalize Cellular Automata, extending them to bounded degree, time varying graphs. The dynamics rewrite the graph at each time step with respect to two physics-like symmetries: causality (bounded speed of information) and homogeneity (the rewriting acts the same everywhere on the graph, at every time step). Universality is the ability simulating every other instances of another (or the same) model of computation. In this work, we study three different notions of simulation for Causal Graph Dynamics, each of them leading to a definition of universality. \\{\bf Keywords.} Cellular Automata, Graph Rewriting, Universality, Simulation, Intrinsic Universality, Construction Universality, Uniformity.
\end{abstract}

\section{Introduction}\label{sec:intro}

\emph{Cellular Automata} (CA) consist in an array of cells, each of them taking a state in a finite set. The array evolves in discrete time-steps with respect to certain physics-like symmetries: causality (there exists a bounded speed of information propagation) and shift-invariance (the evolution acts everywhere the same). It can be shown that these transformations can be described by a local rule, updating the state of a cell according to the states of its neighbors, applied simultaneously on every cell.
Even though their origin lie in physics, CA have been studied in many fashions as a model of distributed computation (self-replicating machines \cite{arbib90}, synchronization problems \cite{MazoyerFiring},...), as well as a variety of multi-agent systems (traffic jam, demographics,...). Various generalization of this model have been developed: stochastic \cite{arrighi2012intrinsic}, asynchronous or non-uniform CA \cite{dennunzio2012computing}, CA over Cayley graphs \cite{machi93,roka:mfcs,dam}, over fixed graphs, Quantum Cellular Automata \cite{ArrighiLATA}.

Even though these generalizations cover a large spectrum of dynamical systems, from physics-toy models to interaction networks, none of them tackle the problematic of defining cellular automata over varying topology. Indeed, in many situations, the notion of `who is next to whom' varies in time according to the dynamic of the system (e.g. nodes become physically connected in a telecom network, get to exchange contact details, get deleted, etc.). However, this type of transformations are typically represented using graph-rewriting like models, which, although implemented through local rewriting rules, fail to conserve the synchronous aspect of cellular automata.

 \emph{Causal Graph Dynamics} (CGD) are a generalization of CA to arbitrary, bounded-degree, time-varying graphs. The foundation of this model are presented in \cite{ArrighiCGD,ArrighiCGD,ArrighiCayleyNesme}, leading to two equivalent definitions using, on one hand, notions of continuity and shift-invariance, and, on the other hand, a notion of local rule applied synchronously and homogeneously on each vertex of the graph. 

\emph{Universality} is the property of having one instance of a model of computation able to simulate instances (possibly all of them) of a model of computation (possible the same). This notion and its different variations is a common subject of study in the field of models of computation. Indeed, it provides the perfect tool to compare different models of computation, using for instance Turing universality, or to fully understand the internal power of a model of computation, using construction universality and/or intrinsic universality. The aim of this work is to complete the results of \cite{MartielMartin,MartielMartin2} about construction universality and intrinsic universality of CGD, in order to give a full overview of the different types of universalities that can be applied to CGD, in the spirit of \cite{Martin201383}.

The paper is organized as follow. Section \ref{sec:CGD} introduces the model of causal graph dynamics and more precisely its constructive definition through local rules. Section \ref{sec:simulations} presents three different notions of simulations, inducing distinct notions of universalities, that will be applied to CGD, namely heredity universality, construction universality and intrinsic universality. Section \ref{sec:intrinsic} is dedicated to the construction of a family of intrinsically universal local rules. Section \ref{sec:construction} extends this result to design a construction machine, therefore proving the construction universality of the model. Section \ref{sec:uniformity} provides an arithmetisation of the model together with a study of the uniformity of the family of intrinsically universal local rules. Finally, section \ref{sec:concl} brings elements of conclusion, and states the next steps to achieve concerning universality of CGD.

\section{Causal Graph Dynamics}\label{sec:CGD}

\noindent {\em Graphs.}
Our CGD are over certain kinds of graphs, referred to as \emph{generalized Cayley graphs} which, basically, correspond to the usual, connected, undirected, countable size, bounded-degree graphs, with five added twists:
\begin{itemize}
\item[$\bullet$] Edges are between ports of vertices, rather than between vertices themselves, so that each vertex can distinguish its different neighbors, via the port that connects to it. 
\item[$\bullet$] There is a privileged pointed vertex playing the role of an origin, so that any vertex can be referred to relative to the origin, via a sequence of ports leading to it. 
\item[$\bullet$] The graphs are considered modulo isomorphism, so that only the relative position of the vertices can matter.
\item[$\bullet$] The vertices and edges are given labels taken in finite sets, so that they may carry an internal state just like the cells of a CA. 
\item[$\bullet$] The labeling functions are partial, so that we may express our partial knowledge about part of a graph. For instance is is common that a local function may yield a vertex, its internal state, its neighbors, and yet have no opinion about the internal state of those neighbors. 
\end{itemize} 
See \cite{ArrighiCayleyNesme} for a proper formalization of
generalized Cayley graphs. Fig.~\ref{fig:graphs} shows the differences
between graphs, pointed graphs, and generalized Cayley graphs.

\noindent {\em Some notations.} 
The {\em vertices} of the graphs (Fig. \ref{fig:graphs}$(a)$) we consider in this paper are uniquely identified by a name like $u$. They may also be labeled with a {\em state} $\sigma(u)$ in $\Sigma$, a finite set.
Each vertex has {\em ports} in a finite set $\pi$. 
A vertex and its port are written $u \port  a$.
An {\em edge} is an unordered pair $\{u \port a, v \port b\}$. 
Such an edge connects vertices $u$ and $v$. We shall consider connected graphs only.
Because the port of a vertex can only appear in one edge, the degree of the graphs is bounded by $|\ports|$. 
Edges may also be labeled with a {\em state} $\delta(\{u \port a, v \port b\})$ in $\Delta$, a finite set. 
The set of all generalized Cayley graphs (see Fig. \ref{fig:graphs}$(c)$) of ports $\pi$, vertices labels $\Sigma$ and edge labels $\Delta$ is denoted $\mathcal{X}_{\pi,\Sigma,\Delta}$. The set of all classical graphs (see Fig. \ref{fig:graphs}$(a)$) of ports $\pi$, vertices labels $\Sigma$ and edge labels $\Delta$ is denoted $\mathcal{G}_{\pi,\Sigma,\Delta}$.

\begin{figure}[h]
\includegraphics[scale=1]{./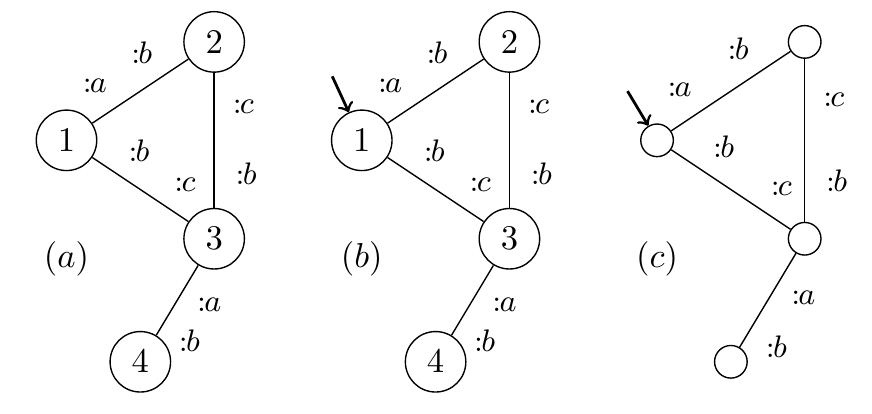}
\caption{\label{fig:graphs} {\em The different types of graphs.} (a) A graph $G$. (b) A pointed graph $(G,1)$. (c) A pointed graph modulo or ``Generalized Cayley graph". The latter are anonymous: vertices have no name and can only be distinguished using the graph structure.}
\end{figure}

\noindent {\em Paths and vertices.}
A Generalized Cayley Graph $X\in \mathcal{X}_{\pi,\Sigma,\Delta} $ is such that any vertex of the graph is identified by the set of paths from the origin to this vertex: for any $u,v \in V(X)$, there is an edge $e=\{u:a,v:b\}$ between $u$ and $v$ if and only if $u.ab\subseteq v$, i.e. any path from the origin to $u$ augmented with the edge $e$ is a path from the origin to $v$. As a consequence, the origin is the single vertex which contains $\varepsilon$ (the empty word). Notice that each vertex can be identified by a particular path from the origin rather than all paths from the origin, for instance by the smallest path according to the lexicographic order.  According to this convention the origin would identified by $\varepsilon$. For convenience, from now on, a vertex, i.e. a set of paths, and a path representing this vertex will no longer be distinguished. I.e. we shall speak of ``vertex'' $u$ in $V(X)$ (or simply $u\in X$).\\~\\

\noindent{\em Operations.} For a generalized Cayley graph $X$ (see \cite{ArrighiCayleyNesme} for details):
\begin{itemize}
\item[$\bullet$] the neighbors of radius $r$ are just those vertices which can be reached with a path of length $r$ starting from the origin,
\item[$\bullet$] the disk of radius $r$, written $X^r$, is the subgraph induced by the neighbors of radius $r+1$, with labellings restricted to the neighbors of radius $r$ and the edges between them.
\end{itemize}

\begin{figure}[h]\label{fig:disk}
\centerline{\includegraphics[scale=.7]{./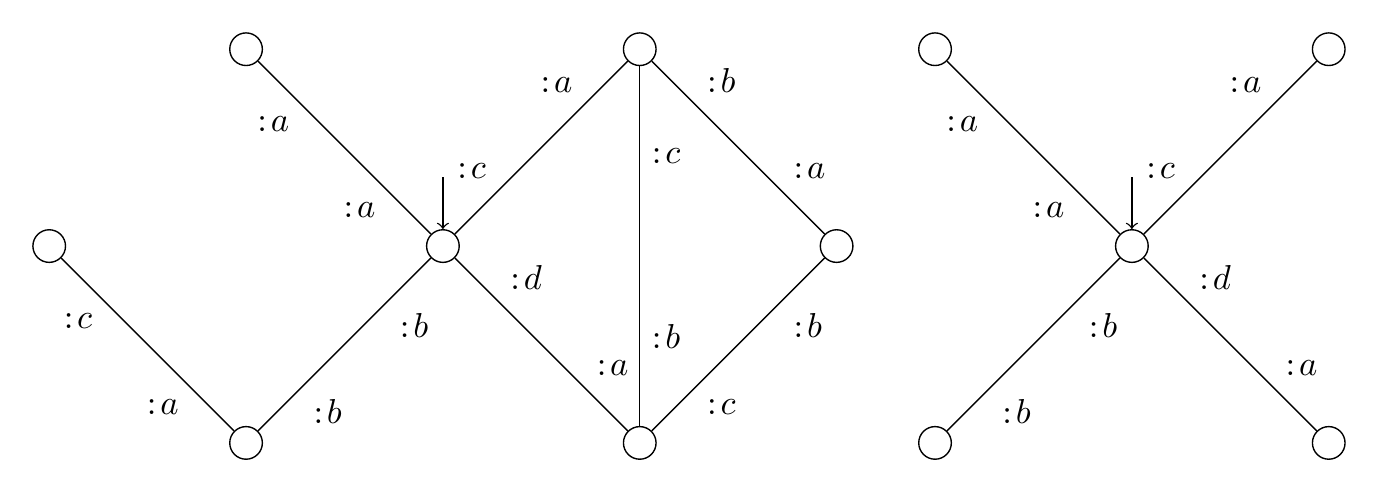}}
\caption{{\em A generalized Cayley graph and its disk of radius $0$.} Notice that the set of paths describing vertices in $X^0$ are strict subsets of those in $X$, even though their shortest representative is the same. For instance the path $ca.cb$ is in the set whose shortest representative is $da$ in $X$ but is not a path in $X^0$.}
\end{figure}

\noindent We denote by $X_u$ the graph whose vertices are named relatively to some other vertex $u$ as the origin. Formally, this is obtained by taking a pointed graph non-modulo the equivalence class $X$, moving the pointer to $u$, and then considering the equivalence class again. This graph is referred to as {\em $X$ shifted by $u$}.

\begin{figure}[h]
\begin{center}
\includegraphics[scale=0.9]{./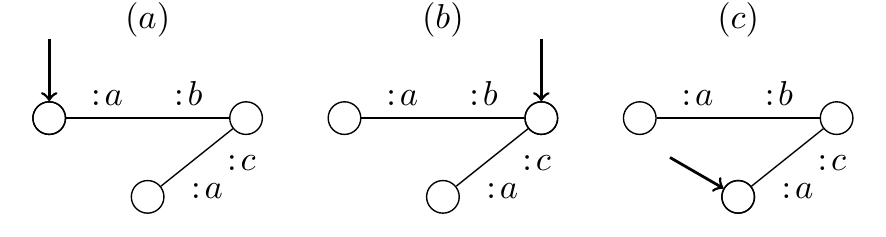}
\end{center}
\caption{\label{fig:shift} (a) A generalized Cayley graph $X$. (b) $X_{ab}$ the generalized Cayley graph $X$ shifted on vertex $ab$. (c)  $X_{ab.ca}$ the generalized Cayley graph $X$ shifted on vertex $ab.ca$, which also corresponds to the graph $X_{ab}$ shifted on vertex $ca$.}
\end{figure}

\noindent The composition of a shift, followed by a restriction, applied on $X$, will simply be written $X_u^r$. Given a generalized Cayley graph $X$, and a vertex $u\in X$, we call $\overline{u}$ the inverse path to $u$. We have $X_{u.\overline{u}}=X$. 

\noindent Moreover, we need a prefixing operation acting on graphs from the set $\mathcal{G}_{\pi,\Sigma,\Delta}$. In the following definitions, $u.G$ with $u\in \pi^*$ and $G$ a graph, stands for the graph $G$ where names of vertices are prefixed with $u$.

\noindent Once given two graphs $G$ and $H$ from the set $\mathcal{G}_{\pi,\Sigma,\Delta}$, it is possible to check if their labeling and ports do not contradict, and to compute their union. If $G$ and $H$ agree on their intersection we say that that they are consistent and we denote their union by $G\cup H$. We say that $G$ and $H$ are trivially consistent if their intersection is empty.
All these notations are rigorously formalized in \cite{ArrighiCayleyNesme}.

We can now introduce our notion of \emph{local rule}.
In a graph generated by a local rule $f$, names of vertices have a particular meaning. When applied on a disk $X^r_u\in\mathcal{X}^r_{\pi,\Sigma}$, $f$ produces a graph $f(X^r_u)$ such that the names of its vertices are sets of elements of the form $u.z$ with $u$ a path of $X^r_u$ and $z$ a suffix in a finite set $S$ containing $\varepsilon$.
The conventions taken are such that the integer $z$ stands for the `successor numbered $z$'. Hence the vertices designated by $\varepsilon,1,2\ldots$ are successors of the vertex $\varepsilon$, whereas those designated by $u,u.1,u.2\ldots$ are successors of its neighbor $u\in X^r$. For instance a vertex named $\{1,ab.2\}$ is understood to be both the first successor of vertex $\varepsilon$ and the second successor of the vertex attained by the path $ab$. Such a vertex can be designated by $1, ab.2$ or $\{1,ab.2\}$.

\begin{figure}[h]
\begin{center}
\includegraphics[scale=0.9]{./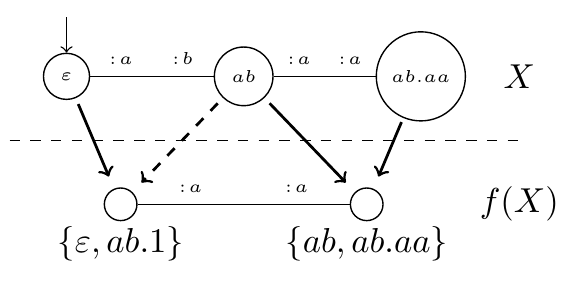}
\end{center}
\caption{Naming convention in the image graph of a local rule. The first vertex in the image graph (bottom graph) is both the direct continuation of the first vertex $\varepsilon$ and the first successor of the second vertex $ab$. The second vertex of the image graph is the continuation of both the second vertex $ab$ and the third vertex $ab.aa$. Continuation relation is represented by plain arrows, while successor relation by dashed arrows.}
\end{figure}

\begin{definition}[Local rule]
A (possibly partial) function $f$ from ${\cal X}^r_{\pi,\Sigma}$ to $\mathcal{G}_{\pi,\Sigma,\Delta}$ is a {\em local rule} if and only if :
\begin{itemize}
\item[$\bullet$] For every $X$, the vertices of $f(X)$ are disjoint subsets of $V(X).S$ and $\varepsilon\in f(X)$,
\item[$\bullet$] There exists a bound $b$ such that for all disks $X^{r+1}$, $|V(f(X^{r+1}))|\leq b$,
\item[$\bullet$] For every disk $X^{r+1}$ and every $u\in X^{0}$ we have that $f(X^r)$ and $u.f(X_u^r)$ are non-trivially consistent,
\item[$\bullet$] For every disk $X^{3r+2}$ and every $u\in X^{2r+1}$ we have that $f(X^r)$ and $u.f(X_u^r)$ are consistent.
\end{itemize}
\end{definition}
 The conditions of consistency are here to ensure that if the local rule is applied on two ``close'' vertices of the same graph, the two resulting subgraphs will be intersecting and consistent.

A local rule is a mathematical object characterized by:
\begin{itemize}
\item $|\pi|$ the degree of the graphs it is applied on,
\item $\Sigma$ the set of vertex labels,
\item $r$ the  radius of the disks it is applied on,
\item $b$ the maximal size of its images.
\end{itemize}
The set of local rules of parameters $(|\pi|,\Sigma,r,b)$ is denoted $\mathcal{F}_{\pi,\Sigma,r,b}$.

Finally, the definition of localizable function describes how these local rules can be used to induce a global function that acts on graphs of arbitrary size.

\begin{definition}[Localizable function]\label{def:localizable}
A (global) function $F$ from ${\cal X}_{\Sigma, \ports}$ to ${\cal X}_{\Sigma, \ports}$ is {\em localizable} if and only if there exists a radius $r$ and a local rule $f$ from ${\cal X}^r_{\Sigma, \ports}$ to ${\cal G}_{\Sigma, \ports}$ such that for all $X$, $F(X)$ is given by the equivalence class, with $\varepsilon$ taken as the pointer vertex, of the graph
$$\sim \bigcup_{u\in X} u.f(X_u^r).$$
where $\sim G$ constructs the generalized Cayley graph having the same structure as $G$, with pointer the vertex with name $\varepsilon$.
\end{definition}

\medskip We now provide two examples of local rules.

\noindent{\bf Example 1: The turtle.}\label{ex:turtle} This transformation is defined over graphs of degree $1$. It switches between the two different graphs of degree one. The corresponding local rule is depicted in Fig. \ref{fig:ruleturtle}.
\begin{figure}[h]
\begin{center}
\includegraphics[scale=1]{./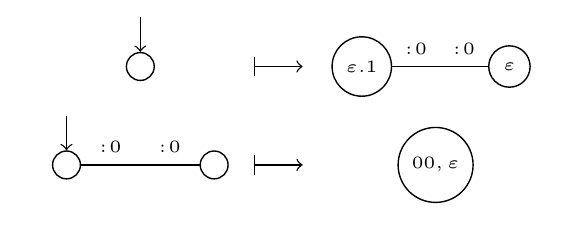}
\end{center}
\caption{\label{fig:ruleturtle}The turtle local rule. The induced dynamics simply switches between the two existing graphs of ports $\{0\}$. In the first case, two vertices are generated with two ``fresh'' names. In the second case, a single vertex is generated with name $\{00,\varepsilon\}$. In total two vertices will be generated by the two different disks present in the graph. They will be identified thanks to the graph union.}
\end{figure}

\noindent{\bf Example 2: The inflating line.}\label{ex:inflline} This transformation is defined over graphs of degree $2$, i.e. with ports $\{0,1\}$. It replaces each vertex by two vertices, doubling the length of the graph. Fig. \ref{fig:ruleline} describes the $9$ different neighborhoods of radius $0$ and their respective image through a local rule inducing this transformation.
\begin{figure}[h]
\includegraphics[scale=0.6]{./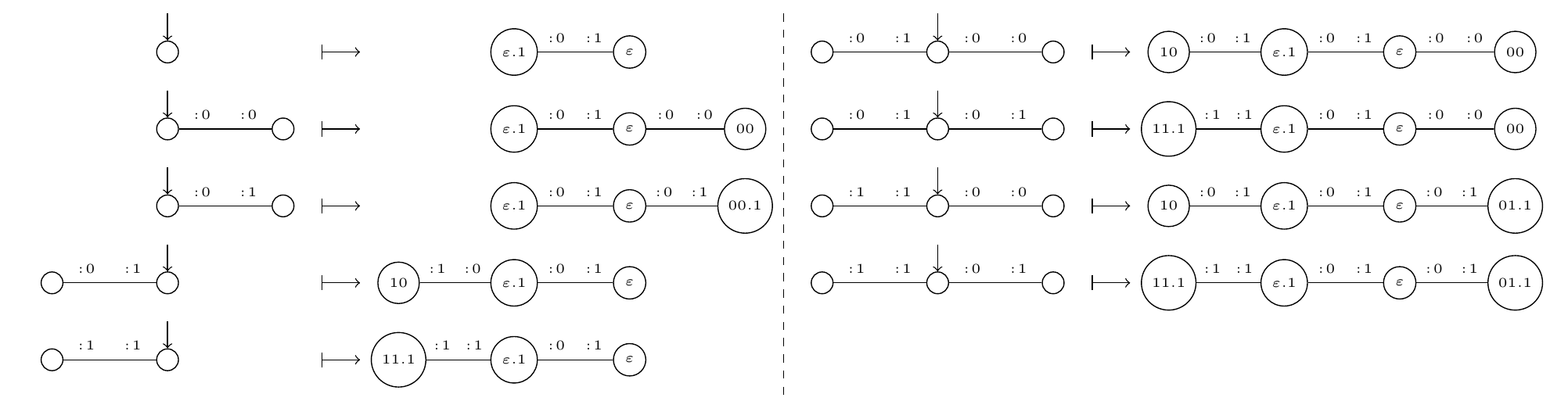}
\caption{\label{fig:ruleline}The inflating line local rule. There are $9$ different disks of radius $0$ and degree $2$. For each of these disks, the local rule generates between $2$ and $4$ vertices. The vertices named $\varepsilon$ and $\varepsilon.1$ are the two direct ``descendant'' of the center of the disk (the pointed vertex). The other vertices are descendant of the neighbor(s) of the pointed vertex, and are present to allow the recomposition of the image graph through the graph union.}
\end{figure}

\section{Simulations}\label{sec:simulations}
We now introduce three different notions of simulation leading to three different ways of achieving universality. The last two are internal to the model, whereas the former links the computational power of CGD to other well studied models of computation (Cellular Automata and Boolean Networks).

 \subsection{Hereditary universality}

Hereditary universality is the property of a model of being able to simulate an universal instance of another model, thus inheriting of the results of universality of this other model. Hereditary universality for the model of causal graph dynamics is rather straightforward. As this model consists in a strict generalization of the classical model of Cellular Automata, one can design a set of graphs together with a local rule simulating an (Turing or intrinsically) universal Cellular Automata. By composing this construction with, for instance, the construction of Smith \cite{SmithAC}, one directly obtains that any Turing machine can be simulated by a CGD, hence that CGD are Turing universal.
Another way of achieving this university result is to notice that any Turing machine can be simulated directly by a uniform family of boolean network, which in turn can be simulated straightforwardly by causal graph dynamics. Thus, we can state:
\begin{theorem}[Hereditary universality]
There exists an hereditary universal causal graph dynamics.
\end{theorem}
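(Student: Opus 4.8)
The plan is to exhibit the required dynamics by embedding a known universal model into CGD. Since, as stressed in the paper, CGD strictly generalize cellular automata, it suffices to (i) realize an arbitrary one-dimensional cellular automaton as a localizable function over an appropriate sub-family of generalized Cayley graphs, and then (ii) instantiate this embedding on a cellular automaton that is already known to be (Turing-)universal, such as the one of Smith~\cite{SmithAC}. The key structural observation is that the bi-infinite line with ports $\{0,1\}$ — each vertex having a port $0$ towards its left neighbor and a port $1$ towards its right neighbor — is a generalized Cayley graph in $\mathcal{X}_{\{0,1\},\Sigma,\Delta}$, and that the class of such graphs is stable under the dynamics we will build.

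First I would fix a CA with alphabet $\Sigma$, radius $k$ and transition map $\delta:\Sigma^{2k+1}\to\Sigma$, and define a local rule $f\in\mathcal{F}_{\{0,1\},\Sigma,k,b}$ of radius $k$ that \emph{does not modify the topology at all}: on a disk $X^k_u$ extracted from a line configuration, $f$ outputs the single vertex $\varepsilon$ carrying the label $\delta$ applied to the $2k+1$ labels read along the disk, together with (copies of) its neighbouring vertices and the incident edges, all with their unchanged labels, so that the union in Definition~\ref{def:localizable} recomposes the whole updated line. Here the suffix set is simply $S=\{\varepsilon\}$, so $V(f(X))\incl V(X).\{\varepsilon\}$ and $\varepsilon\in f(X)$; the image size is bounded by the number of vertices in a disk of radius $k$, giving $b$; and the two consistency conditions of the definition of a local rule hold essentially trivially, because $f$ never creates, deletes or reconnects a vertex and the label it assigns to $\varepsilon$ depends only on the radius-$k$ neighbourhood — hence two overlapping applications of $f$ always agree. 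A short induction on the construction of the union then shows that the induced localizable function $F$, restricted to line configurations labelled over $\Sigma$, coincides with the global transition function of the chosen CA.

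Finally, plugging in Smith's universal construction~\cite{SmithAC} for the CA in the above embedding produces a single CGD that simulates, via this encoding, a Turing-universal instance of CA, and therefore simulates every Turing machine; under the notion of simulation made precise in the rest of Section~\ref{sec:simulations} this is exactly hereditary universality, which proves the theorem. An alternative route, sketched in the paper, replaces CA by a uniform family of boolean networks: a boolean network is a fixed bounded-degree labelled graph equipped with a purely label-rewriting local rule, so the same ``static-topology'' construction applies verbatim, and uniform boolean networks are already Turing-universal. The only point requiring genuine care — and the step I expect to be the main obstacle — is the interplay with the pointed/anonymous nature of generalized Cayley graphs: a bi-infinite line taken modulo isomorphism loses the absolute position of cells, so the simulation must be set up on \emph{pointed} configurations (equivalently, one places the Turing head, or a dedicated origin marker, at the vertex $\varepsilon$) and one must check that this marker is preserved by $F$ and does not interfere with the consistency conditions; once this is pinned down, everything else is routine bookkeeping.
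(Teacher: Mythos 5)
Your proposal is correct and follows essentially the same route as the paper, which only sketches the argument in the paragraph preceding the theorem: embed a (Turing-)universal cellular automaton such as Smith's into CGD via a topology-preserving local rule on line graphs, with the boolean-network route as an alternative. Your additional care about the pointed/anonymous nature of generalized Cayley graphs and the verification of the consistency conditions is a welcome fleshing-out of details the paper leaves implicit, but it is not a departure from its approach.
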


This result leads directly to the following undecidability result:

\begin{theorem}
Let $F$ be a CGD over a set of configuration ${\cal G}_{\pi,\Sigma,\Delta}$, let $q\in \Sigma$ be an internal state and let $G\in{\cal G}_{\pi,\Sigma,\Delta}$ be an initial configuration. It is undecidable to determine if $q$ occurs in the computation of $F$ started on $G$.
\end{theorem}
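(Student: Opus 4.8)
The plan is to reduce the halting problem to the occurrence problem, using the hereditary universality just established. Fix once and for all a universal Turing machine $M_U$ with a single halting state $q_h$ (and arrange, as one always may, that $q_h$ is entered at most once along any run and is never the state at time $0$). By hereditary universality there is a CGD $F$ on some ${\cal G}_{\pi,\Sigma,\Delta}$, together with a computable encoding $w\mapsto G_w$ of Turing inputs as configurations, such that the orbit $(F^t(G_w))_{t\ge 0}$ faithfully simulates the run of $M_U$ on $w$: a bounded number of applications of $F$ realises one step of $M_U$, and the control state, head position and tape contents of $M_U$ at each step can be read off from the corresponding $F^t(G_w)$ by a local inspection of the graph. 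Enlarging $\Sigma$ by one fresh symbol $q$ and adding to the local rule of $F$ a single clause that stamps $q$ onto the ``head vertex'' precisely when that vertex carries the encoding of $q_h$ yields another CGD $F'$ (the added clause obeys all the consistency conditions of a local rule, being triggered only on the unique head vertex); write $q$ for this fresh symbol.

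By construction, $q$ occurs somewhere in the computation of $F'$ started on $G_w$ if and only if the simulated run of $M_U$ on $w$ ever enters $q_h$, i.e.\ if and only if $M_U$ halts on $w$. Since $M_U$ is universal, $\{w : M_U \text{ halts on } w\}$ is recursively enumerable but not recursive, and $w\mapsto G_w$ is computable; hence any algorithm that, given an initial configuration $G$, decided whether $q$ occurs in the computation of $F'$ on $G$ would decide the halting problem. Therefore no such algorithm exists, and the occurrence problem is undecidable already for this fixed pair $(F',q)$ with $G$ ranging over ${\cal G}_{\pi,\Sigma,\Delta}$, which proves the theorem.

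The only non-routine ingredient is the phrase ``faithfully simulates'', i.e.\ checking that the encoding $w\mapsto G_w$ and the decoding of Turing configurations out of $F^t(G_w)$ are computable and uniform in $t$ — but this is exactly what the hereditary-universality construction provides (through Smith's universal cellular automaton \cite{SmithAC}, or equivalently through uniform families of Boolean networks), so I would simply invoke it rather than redo it. A minor point to verify along the way is that each $G_w$ is a legitimate element of ${\cal G}_{\pi,\Sigma,\Delta}$ — connected and of degree at most $|\pi|$ — which holds automatically once one fixes the standard encoding of a (one-sided, growing) Turing tape as a labelled path graph with the chosen port set.
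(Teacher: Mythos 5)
Your reduction from the halting problem via the Turing universality provided by hereditary universality is exactly the argument the paper intends (it states the theorem as a direct consequence of the preceding hereditary-universality theorem and gives no further details), and your version is correct, including the useful precision that undecidability already holds for a fixed pair $(F',q)$ with only $G$ varying. No issues to report.
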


  \subsection{Intrinsic simulation}

When considering the problem of intrinsic simulation inside a model or in-between models, the problem of qualifying the structure of the computation arises. Indeed, intrinsic simulation is about simulating another instance of a model while preserving the structure of the computation. This type of universality has already been intensively studied for CA \cite{OllingerCSP,Durand-LoseIntrinsic1D,cracow} and quantum CA \cite{ArrighiFI,ArrighiNUQCA,ArrighiPQCA}. In the case of CA, it is required that one must be able to obtain the simulated configuration by grouping cells of the simulating configuration into blocks, thus forming ``meta''-cells.  The regularity and rigidity of the underlying space in the particular case of CA allow to simply talk about cell grouping to capture the idea of having similar structures in the simulated and the simulating configuration. In the case of CGD, we would like to state that the simulating graph has, somehow, the same topology as the simulated graph. However, the irregularity of the configuration forbids us to define a proper notion of ``vertex grouping''. A solution to this problem is to observe that the transformation associating to a configuration its ``grouped'' version has numerous similarities with the global transformation of a CA. More particularly, cell grouping can be defined by slightly generalizing the less constructive definition of CA using continuity and shift-commutation. This is the approach we chose to define our notion of intrinsic simulation for CGD.
In \cite{ArrighiCayley}, it was shown that localizable dynamics introduced in section \ref{sec:CGD} can also be described in a very axiomatic way by endowing the set of graphs with a compact metric and by defining a notion of shift-invariance for functions over graphs, generalizing the famous theorem by Curtis, Hedlund and Lyndon. These definitions can be slightly altered to characterize continuity and shift-invariance for a transformation from a set $\mathcal{X}_{\pi_1,\Sigma_1,\Delta_2}$ to a set $\mathcal{X}_{\pi_2,\Sigma_2,\Delta_2}$.

\begin{definition}[Intrinsic simulation]
A localizable dynamics $(\mathcal{X}_{\pi_1,\Sigma_1},f_1)$ intrinsically simulates another localizable dynamics $(\mathcal{X}_{\pi_2,\Sigma_2},f_2)$ if and only if there exists a continuous, shift-invariant, bounded, injective, locally computable function $E:\mathcal{X}_{\pi_2,\Sigma_2}\rightarrow \mathcal{X}_{\pi_1,\Sigma_1}$ and a constant $\delta$ such that, for all graph $X\in\mathcal{X}_{\pi_2,\Sigma_2}$:
$$ E\circ F_2(X) = F_1^\delta \circ E(X) $$
\end{definition}

\noindent\emph{Remark on the notion of locally computable transformations.} In \cite{ArrighiCayleyNesme}, it is proved that the application of a continuous shift-invariant transformation, i.e. a CGD, on a finite graph is computable. This property naturally extends to continuous shift-invariant transformations from a set $\mathcal{X}_{\pi_1,\Sigma_1,\Delta_2}$ to a set $\mathcal{X}_{\pi_2,\Sigma_2,\Delta_2}$. This is in fact our notion of locally computable function. Thus, in the previous definition, the condition that $E$ is locally computable is already verified when requiring its continuity and shift-invariance.

Now the definition of intrinsic universality comes naturally:

\begin{definition}[Intrinsic universality]
 A localizable dynamics $(\mathcal{X}_{\pi,\Sigma},f)$ is intrinsically universal if and only if, it intrinsically simulates any other localizable dynamics $(\mathcal{X}_{\pi',\Sigma'},f')$.
\end{definition}
Section \ref{sec:intrinsic} is dedicated to the construction of a family of intrinsically universal local rule. More precisely, we will construct a set of graphs $\mathcal{X}_{\pi_u,\Sigma_u}$ and a family of rules $(f_d)_{d\in \mathbb{N}}$, all acting on this set of graphs, and such that, given any local rule $f$ there exists a local rule $f_d$ in this family such that $f_d$ intrinsically simulates $f$.

 \subsection{Construction}

While this formalism of intrinsic universality captures the idea we have in mind of what would be nowadays called an interpreter, another less formal definition suggested by von Neumann, namely the universal construction machine, is closer to the notion of a compiler. Von Neumann's idea, directly inspired by Turing's universal machine, is that there must exist a machine (not necessarily a Turing machine) which, when provided a suitable description of an instance of a computational model, constructs a copy of it. This definition is particularly useful when considering the problem of self-reproduction~\cite{arbib}, which is one of the main features of Cellular Automata. The most classical example to illustrate this definition is the uniform generation of Boolean networks where a Turing machine receives as an input the standard encoding of the circuit and its size and explicitly generates the corresponding boolean network~\cite{badiga88}. Though detached from any mathematical formalism, this notion particularly fits to our model, in as much as we are allowed to modify the topology and have enough freedom to design such a machine inside the model itself. Our universal construction machine should be able to, given a description of a graph $X$ and the description of a local rule $f$, construct the initial state of an intrinsic simulation of the application of $f$ on $X$. 

\begin{definition}[Universal construction machine]
A universal construction machine is $5-$tuple $(X_M,f_M,enc_{graph},enc_{rule},f_d)$ where:
\begin{itemize}
\item[$\bullet$] $X_M$ is a graph implementing the machine,
\item[$\bullet$] $f_M$ is a local rule inducing the functioning of the machine,
\item[$\bullet$] $enc_{graph}^{\pi,\Sigma}: \mathcal{X}_{\pi,\Sigma} \rightarrow \mathcal{X}_{\pi_u,\Sigma_u}$ an injective computable function encoding the initial graph,
\item[$\bullet$] $enc_{rule}^{\pi,\Sigma}:\mathcal{F}_{\pi,\Sigma} \rightarrow \mathcal{X}_{\pi_u,\Sigma_u} $ an injective computable function encoding the local rule to simulate,
\item[$\bullet$] $(f_\pi)$ an intrinsically universal family of local rules.
\end{itemize}
And such that for all $X\in \mathcal{X}_{\pi,\Sigma} $ and all local rule $f$ acting on this set, successive applications of the dynamics induced by $f_M$ on the graph $X_M$ connected to $enc_{graph}^{\pi,\Sigma}(X)$ and $enc_{rule}^{\pi,\Sigma}(f)$ will constructs a graph $X_f$ such that there exists $d$ such that $(f_d,X_f)$ intrinsically simulates $(f,X)$.
\end{definition}

Less formally, the machine itself is a graph $X_M$ to which is connected an encoding $enc_{graph}^{\pi,\Sigma}(X)$ of the initial graph and an encoding $enc_{rule}^{\pi,\Sigma}(f)$ of a local rule. After a certain number of iterations of the dynamics induced by $f_M$ (possibly infinitely many iterations, if $X$ is infinite), the initial state $X_f$ of the simulation is ready. Local rule $f$ can then be intrinsically simulated by an appropriate universal rule $f_d$.
Section \ref{sec:construction} is dedicated to the presentation of such a construction machine.

\section{An intrinsically universal family of local rules}\label{sec:intrinsic}
In order to construct our intrinsically universal family of local rules, we first reduce the set of local rules we need to simulate. This is the purpose of section \ref{ssec:prelim}. Section \ref{ssec:univfamily} then presents the proper construction of a universal set of graphs, and details the construction of this universal family.

	\subsection{Preliminary results}\label{ssec:prelim}

Lemma \ref{lem:1} and \ref{lem:2} are used to restrict the set of local rules we need to simulate.
\begin{lemma}[Radius $1$ is universal]\label{lem:1}
Let $f$ be a local rule of radius $r=2^\ell$ over $\mathcal{X}_{\pi,\Sigma,\Delta}$. There exists a local rule $f'$ over  $\mathcal{X}_{\pi^r,\Sigma^{\ell +1},\Delta\cup \{\star\}}$ of radius $1$ such that $(\mathcal{X}_{\pi^r,\Sigma\times\{1,...,\ell\},\Delta\cup \{\star\}},f')$ simulates $(\mathcal{X}_{\pi,\Sigma,\Delta},f)$.
\end{lemma}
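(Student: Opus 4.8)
Since $r=2^{\ell}$, the plan is to turn the radius-$r$ neighbourhood into a radius-$1$ neighbourhood by applying, $\ell$ times, a ``graph squaring'' operation, and then to let $f'$ run a short clocked cycle that maintains this encoding and applies $f$ once per cycle. Throughout, write $F_g$ for the localizable function induced by a local rule $g$. For a graph over ports $\pi'$, let $\op{sq}$ be the map that keeps every vertex and every edge (original edges keeping their $\Delta$-label) and, at each vertex $v$, adds for every ordered pair of distinct neighbours $u,w$ of $v$ a new edge labelled $\star$ between $u$ and $w$ whose ports spell the length-$2$ port-path through $v$. This only inspects the radius-$1$ disk of $v$, so $\op{sq}$ is induced by a radius-$1$ local rule; it turns $\pi'$ into $(\pi')^2$ and leaves $\Sigma$ untouched. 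The key geometric fact is that $\op{sq}(X)$ connects exactly the pairs of vertices at distance $\le 2$ in $X$ (a distance-$2$ pair has a common neighbour, which creates the shortcut; distance-$1$ pairs are already connected); hence by induction $\op{sq}^{(k)}(X)$ connects exactly the pairs at distance $\le 2^{k}$, so $\op{sq}^{(\ell)}(X)$ has ports in $\pi^{2^{\ell}}=\pi^{r}$ and a radius-$1$ disk at $u$ from which the disk $X_u^{r}$ can be read back (each $\pi^{r}$-port naming the path of $X$ it stands for). Deleting the $\star$-edges recovers $X$, so every $\op{sq}^{(k)}$ is injective.

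\textbf{The simulating rule.} Take $E\colon\mathcal{X}_{\pi,\Sigma,\Delta}\to\mathcal{X}_{\pi^{r},\Sigma\times\{1,\dots,\ell\},\Delta\cup\{\star\}}$ to be the map sending $X$ to the same graph with each state $\sigma$ replaced by $(\sigma,1)$ and the ports of $\pi$ padded into $\pi^{r}$; it is continuous, shift-invariant, bounded and injective, hence locally computable. Define $f'$ (radius $1$, states in $\Sigma^{\ell+1}$ — one slot for the genuine $\Sigma$-state, the remaining slots carrying the clock value in $\{1,\dots,\ell\}$ together with the working data used while rebuilding shortcut edges) as the clocked rule that, over $\delta=\ell+1$ consecutive steps, performs the cycle: if the clock reads $k<\ell$, apply one level of squaring and advance the clock to $k+1$; if it reads $\ell$, the current graph is $\op{sq}^{(\ell)}(Y)$ for some $Y$, so read the disk $Y_u^{r}$ off the radius-$1$ disk, output $f(Y_u^{r})$ with ports re-encoded into $\pi^{r}$, states re-encoded into $\Sigma^{\ell+1}$ with clock $1$, $\Delta$-labels kept, and the $\star$-edges deleted. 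Because all vertices — including those freshly created by the $f$-step, which are born with clock $1$ — advance synchronously, one obtains $F_{f'}^{\,\ell+1}\circ E=E\circ F_{f}$, i.e.\ the simulation relation of Section~\ref{sec:simulations} with $\delta=\ell+1$. (Equivalently one may take $E$ itself to be $\op{sq}^{(\ell)}$ and let $f'$ re-square its own output; the clocked version above keeps $E$ trivial.)

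\textbf{Main obstacle.} The real work is checking that $f'$ is a \emph{bona fide} local rule, i.e.\ that it satisfies the two consistency conditions of the definition. For the squaring steps one must verify that the several common neighbours of a distance-$2$ pair create the \emph{same} $\star$-edge with the \emph{same} pair of $\pi^{r}$-ports, and that the continuation of each endpoint is named identically by all nearby applications — a finite but fiddly bookkeeping on port words, complicated by the fact that at micro-step $k$ a $\pi^{r}$-port must be parsed as a padded length-$2^{k-1}$ word. For the $f$-step one must verify that the consistency enjoyed by $f$ survives the renaming that replaces a length-$d$ port-path of $X$ ($d\le r$) by a single letter of $\pi^{r}$: this renaming has to be injective and prefix-compatible so that the graph unions in Definition~\ref{def:localizable} still glue. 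One also has to be slightly careful about the outermost ring of the disk $X_u^{r}$, which is handled by squaring to the precise radius $f$ reads; and about padding shorter port words and states consistently across micro-steps. A clean alternative organisation is to first settle the case $\ell=1$ (radius $2$ over $\mathcal{X}_{\pi,\Sigma,\Delta}$ simulated by radius $1$ over $\mathcal{X}_{\pi^{2},\Sigma^{2},\Delta\cup\{\star\}}$) and then iterate $\ell$ times, invoking transitivity of simulation; the state space then picks up exactly one factor of $\Sigma$ per level, which is $\Sigma^{\ell+1}$.
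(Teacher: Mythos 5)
Your proposal follows essentially the same route as the paper's (outline) proof: iteratively add $\star$-labelled ancillary edges between vertices at distance $2$, so that after $\ell$ squaring steps the radius-$r$ neighbourhood is visible at radius $1$, then apply $f$, drop the ancillary edges and reset the counters. Your write-up is in fact more detailed than the paper's, correctly identifying the port-word bookkeeping and the consistency conditions as the points that a full proof would need to verify.
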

\begin{proof}
Outline. Over the first $i=1,\ldots ,\ell$ steps, each vertex will grow some ancillary edges to, in the end, reach all neighbors in its neighborhood of radius $r$. More precisely, states of vertices are kept identical, whereas an ancillary edge with state $\star$ is added between any two vertices at distance $2$. Moreover, the vertices count until stage $\ell$. At this point, the neighbors that were initially at distance $r$ have become visible at distance one. The local rule $f$ can be applied, all ancillary edges are dropped, and all counters are reset. \qed
\end{proof}
\begin{lemma}[Label free is universal]\label{lem:2}
Let $f$ be a local rule of radius $r$ over $\mathcal{X}_{\pi,\Sigma,\Delta}$. There exists a local rule $f'$ over  $\mathcal{X}_{\pi \cup {|\Sigma|\times |\Delta|},\varnothing,\varnothing}$ such that $(\mathcal{X}_{\pi',\varnothing,\varnothing},f')$ simulates $(\mathcal{X}_{\pi,\Sigma,\Delta},f)$, where $\pi'=\pi \sqcup \Sigma\sqcup\Delta^{|\pi|}$.
\end{lemma}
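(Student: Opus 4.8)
The plan is to absorb every vertex- and edge-label into the graph structure, using the extra ports of $\pi'=\pi\sqcup\Sigma\sqcup\Delta^{|\pi|}$ as tags; here I read $\Delta^{|\pi|}$ as $|\pi|$ disjoint copies of $\Delta$, the copy indexed by a port $a\in\pi$ recording the label of whichever edge sits at $a$. Given $X\in\mathcal{X}_{\pi,\Sigma,\Delta}$, the encoding $E(X)$ keeps each vertex $u$ of $X$ with all of its $\pi$-edges unchanged, attaches to $u$ a fresh unlabelled \emph{state-flag} joined to $u$ by an edge between the port $\sigma(u)\in\Sigma$ of $u$ and a fixed port $p_0\in\pi$ of the flag, and, for every port $a$ of $u$ carrying an edge of label $\delta$, attaches an \emph{edge-flag} joined to $u$ by an edge between the $\Delta^{|\pi|}$-port of $u$ that tags $a$ with $\delta$ and the port $p_0$ of the flag. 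From its radius-$1$ neighbourhood a vertex of $E(X)$ is recognisable as original---it bears an edge on a $\Sigma$-port---or as a flag---its unique edge runs from $p_0$ to a $\Sigma$- or $\Delta^{|\pi|}$-port of an original vertex. Hence $E$ is injective (delete the flags and read $\sigma(u)$ and the edge-labels off the ports they occupy), bounded (at most $|\pi|+1$ flags per original vertex), continuous ($E(X)$ restricted to radius $k$ depends only on $X$ restricted to radius $k+1$), and shift-invariant for the obvious transport of the pointer onto original vertices; by the remark following the definition of intrinsic simulation, continuity and shift-invariance already yield local computability.

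Next I would build the simulating local rule $f'$, of radius $r'=r+2$---large enough that, from an original vertex or one of its flags, the disk $X^r_u$ together with the flags encoding its labels is visible. On a disk $Y^{r'}_v$ of a graph $Y=E(X)$ the rule acts by cases. If $v$ is an original vertex $u$, then $f'$ reconstructs $X^r_u$, computes $G:=f(X^r_u)$, and outputs the structural encoding of $G$: the vertices and $\pi$-edges of $G$ named exactly as $f$ names them, a state-flag on every $w\in G$ whose state is specified by $G$, an edge-flag on every port of $w$ carrying a label in $G$, and the analogous fragments of the neighbours' images (just as in the turtle and inflating-line examples of Section~\ref{sec:CGD}) so that everything recombines under the union. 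If $v$ is a flag of an original vertex $u$, then $f'$ reconstructs $X^r_u$, reads off $w_0:=\varepsilon\in f(X^r_u)$, the continuation of $u$, and outputs the flag of $w_0$ into which $v$ is declared to continue, according to a fixed convention (the state-flag of $u$ into the state-flag of $w_0$; the $k$-th edge-flag of $u$ in the port ordering into the $k$-th edge-flag of $w_0$ when it exists, and into the state-flag of $w_0$ otherwise). One then checks that $f'$ is a genuine local rule---the four consistency conditions follow from those of $f$ and from the fact that every flag is determined by a radius-$1$ neighbourhood---and that $E\circ F=F'\circ E$, so $\delta=1$: restricted to the original vertices this is exactly the defining equation $F(X)=\sim\bigcup_u u.f(X^r_u)$; every state and every edge-label of $F(X)$ is specified by at least one piece $f(X^r_u)$, which then contributes the corresponding flag, and all pieces specifying a given datum agree on it by consistency of $f$, so the flags assemble into precisely $E(F(X))$.

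I expect the one genuinely delicate point to be the vertex-naming bookkeeping inside the images of $f'$. In the union defining $F'(E(X))$ an old flag of $u$ must be identified with the new flag of $u$'s continuation $w_0$ it is declared to continue into, even though $w_0$ may carry a different state, so that the corresponding flag sits on a different port of $\pi'$. This forces the image produced at $u$ to name that flag not merely as a fresh successor attached to $w_0$, but also ``relative to $u$'', as the continuation of the neighbour reached through the port that carried the old flag---so that it shares a name-component with the contribution of the old flag itself; the compatibility of these names is exactly the ``identification through the graph union'' mechanism already used for the examples of Section~\ref{sec:CGD}. The remaining points---recognising flags from a radius-$1$ neighbourhood, checking that $r'=r+2$ truly suffices, and seeing that partial labellings of the image of $f$ cause no clash (an item left unspecified by one piece simply receives no flag from it, and is supplied by another)---are routine.
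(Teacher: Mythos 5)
Your proof follows exactly the paper's approach: the paper encodes a vertex label $i\in\Sigma$ as a dangling vertex on port $i$ and an edge label $j\in\Delta$ on port $a$ as a dangling vertex on port $j$ of the $a$-th copy of $\Delta$, which is precisely your state-flag and edge-flag construction. The paper only gives this encoding as an outline and asserts injectivity, continuity and shift-invariance; your additional detail on the radius of $f'$, the continuation convention for flags, and the naming bookkeeping for the graph union fills in steps the paper leaves implicit, and is consistent with its intent.
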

\begin{proof}
Outline. The presence of a label $i\in\Sigma$ on a vertex will be encoded by the presence of a dangling vertex on port $i$ of this vertex. In the same fashion, if an edge labeled with $j\in \Delta$ connects two ports $u:a$ and $v:b$, then vertex $u$ will have a dangling vertex on port $j$ in its $a^{\textrm th}$ port component and $v$ a dangling vertex in its $b^{\textrm th}$ port component. 
Notice that not all graphs are valid encoding, e.g. if a vertex has a dangling vertex on port $i\in\Sigma$ and on port $j\in\Sigma$ at the same time. Nevertheless this encoding verifies all the required properties as it is injective, continuous and shift-invariant.\qed

\end{proof}
Notice that these two constructions are not incompatible. Composing the two in the right order leads to the fact that any local rule can be intrinsically simulated by a local rule of radius one with no labels. In other words, the subset of localizable dynamics of radius one with no labels is intrinsically universal.
\begin{corollary}[Radius $1$ label free is universal]
Let $f$ be a local rule of radius $r$ over $\mathcal{X}_{\pi,\Sigma,\Delta}$. There exists a local rule $f'$ over  $\mathcal{X}_{\pi' \varnothing,\varnothing}$ of radius $1$ such that $(\mathcal{X}_{\pi',\varnothing,\varnothing},f')$ simulates $(\mathcal{X}_{\pi,\Sigma,\Delta},f)$.
\end{corollary}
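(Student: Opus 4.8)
The plan is to derive the corollary by chaining the two reductions: Lemma~\ref{lem:1} pays the price of larger label sets in exchange for radius one, and Lemma~\ref{lem:2} then pays the price of more ports in exchange for emptying those label sets, so applying them \emph{in this order} lands on a rule that is simultaneously of radius one and label-free. Two preliminary remarks make the chaining legitimate. First, the simulation relation of the preceding definitions is transitive: composing two encodings $E_1,E_2$ yields a map that is again continuous, shift-invariant, injective, bounded and locally computable, and from $E_1\circ F_2=F_1^{\delta_1}\circ E_1$ one gets $E_1\circ F_2^{\delta_2}=F_1^{\delta_1\delta_2}\circ E_1$ by an immediate induction, so the two time-dilations just multiply. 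Second, a local rule of radius $r$ may be read as a local rule of any radius $r'\ge r$, by precomposing it with the restriction of a radius-$r'$ disk to its radius-$r$ subdisk; this lets me assume, whenever Lemma~\ref{lem:1} is applied, that the radius is a power of two.

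Concretely, starting from $f$ of radius $r$ over $\mathcal{X}_{\pi,\Sigma,\Delta}$, round $r$ up to $2^{\ell}$ and apply Lemma~\ref{lem:1} to obtain a radius-one rule $f_1$ over $\mathcal{X}_{\pi^{r},\Sigma_1,\Delta_1}$ simulating $f$, where $\Sigma_1$ and $\Delta_1=\Delta\cup\{\star\}$ are finite. Now apply Lemma~\ref{lem:2} to $f_1$: it produces a rule $f'$ over $\mathcal{X}_{\pi',\varnothing,\varnothing}$ with $\pi'=\pi^{r}\sqcup\Sigma_1\sqcup\Delta_1^{|\pi^{r}|}$ that simulates $f_1$. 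The one fact I need beyond the statement of Lemma~\ref{lem:2} is that its construction does not raise the radius: the encoding replaces the label of a vertex (resp. of an edge) by a dangling vertex sitting at distance one from that vertex, while the disk of radius $\rho$ of a graph already exposes everything up to distance $\rho+1$, with the labels of the distance-$\rho$ vertices among that data; since those labels are now carried by dangling vertices at distance $\rho+1$, they still lie inside the disk, so a radius-$\rho$ rule on the encoded graph can reconstruct the radius-$\rho$ disk of the original graph. In particular $f'$ is again of radius one, and it is label-free by construction. Transitivity of simulation then gives that $(\mathcal{X}_{\pi',\varnothing,\varnothing},f')$ simulates $(\mathcal{X}_{\pi,\Sigma,\Delta},f)$, with $\pi'$ as above, which is the claim.

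The only genuinely non-formal point, and hence the place to be careful, is that last observation about Lemma~\ref{lem:2} being radius-preserving, which requires a glance back at its proof rather than at its statement: one must check that within its radius-$\rho$ disk the simulating rule can (i) tell the ``real'' vertices apart from the dangling label-vertices, which it does because the latter hang on the reserved ports coming from $\Sigma_1$ and $\Delta_1$, and (ii) recover the vertex-labels and edge-labels of all vertices at distance $\rho$, which sit exactly one step further out and are therefore still visible. Everything else --- rounding the radius, the transitivity bookkeeping, and the fact that composing injective continuous shift-invariant maps stays injective continuous shift-invariant --- is routine.
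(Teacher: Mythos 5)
Your proposal is correct and follows the same route the paper intends: the paper's entire justification is the one-line remark that the two lemmas can be composed ``in the right order,'' and you compose them in exactly that order (Lemma~\ref{lem:1} first, then Lemma~\ref{lem:2}), merely supplying the details the paper omits --- transitivity of simulation, rounding $r$ up to a power of two, and the check that the dangling-vertex encoding of Lemma~\ref{lem:2} does not increase the radius. No substantive difference from the paper's argument.
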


  \subsection{A family of intrinsically universal local rules}\label{ssec:univfamily}

We now describe a family of intrinsically universal local rules $(f_d,\mathcal{X}_{\pi_u,\Sigma_u,\Delta_u})_d$ such that $f_d$ simulates all rules over $\mathcal{X}_{\pi,\varnothing,\varnothing}$ with $|\pi|=d$. More precisely, all of these universal rules will act upon the same set of graphs $\mathcal{X}_{\pi_u,\Sigma_u,\Delta_u}$, and only differ in their radius. To define these rules, we are faced to several problems:
\begin{itemize}
\item[$\bullet$] Our universal rules all act upon a given set of graphs of bounded degree $|\pi_u|$. We need to encode any graphs of bounded degree into our set of graphs $\mathcal{X}_{\pi_u,\Sigma_u,\Delta_u}$. Section \ref{subs:graphencoding} tackles this issue and introduces an encoding of any graph of bounded degree in a graph of degree $3$.

\item[$\bullet$] There is an unbounded number of local rules of radius $1$ with no labels. Hence, the information of which local rule is to be simulated cannot be stored as a label in $\Sigma_u$. Section \ref{subs:ruleencoding} offers an encoding of any local rule in a subgraph whose purpose is to be attached to every simulated vertex.

\item[$\bullet$] In order to simulate more than a single time step of the local rule, we must be able to create several instances of the graph containing its encoding and transmit these instances to the descendants of the simulated vertex. Section \ref{subs:universalrule} offers a way to duplicate a subgraph describing a local rule, together with some synchronization tools.
\end{itemize}

A description of the functioning of the universal local rule is given in section \ref{subs:universalrule}. In this section we might refer to simulated vertices as ``meta''-vertices since each of these vertices will be encoded in a graph structure.

\subsubsection{Graph encoding}\label{subs:graphencoding}

We choose the following encoding to represent any graph of bounded degree $\pi$ in a graph of degree $3$. For readability, we will give explicit names to the three ports used in the following definition. The set of ports in the encoding can be assimilated to $\{0,1,2\}$. The three ports are: \textit{previous}, \textit{next} and \textit{neighbor}. The set containing those three ports will be referred to as $\pi_{\mbox{\scriptsize graph}}$. We define the set of labels $\Sigma_{\mbox{\scriptsize graph}}$ as the set $\{$\textit{VERTEX},\textit{PORT}$\}$.

\begin{definition}[Graph encoding]
Given a set of ports $\pi$, consider the transformation $E^{\mbox{\scriptsize graph}}_\pi:\mathcal{X}_{\pi}\rightarrow \mathcal{X}_{\pi_{\mbox{\scriptsize graph}},\Sigma_{\mbox{\scriptsize graph}}}$ defined as follows:
\begin{itemize}
\item To each vertex $v$ in $X$, corresponds $\pi +1$ vertices $v_0,...,v_{\pi}$ in $E^{\mbox{\scriptsize graph}}_\pi(X)$ and the following edges: for all $i\in\{0,...,|\pi|\}$, $\{v_i:next,v_{i+1}:previous\}$. $v_i$ has label \textit{PORT} for $i<|\pi|$ and $v_{|\pi|}$ has label \textit{VERTEX}.
\item To each edge $\{u:i,v:j\}$ in $X$ corresponds an edge $\{u_i:neighbour,v_j:neighbour\}$.
\end{itemize}
\end{definition}
The idea is to split the encoded vertex into $|\pi|+1$ vertices and arrange them into a ring. Each vertex $v_i$ $(i<|\pi|)$ represents a port of the encoded vertex. The last vertex $v_\pi$ marks the start of the ring (the vertex representing port $0$ will be found on its port \textit{next}). Fig. \ref{fig:vertices} describes the encoding for graphs with $|\pi|=3$.

\begin{figure}[h]
\begin{center}
\includegraphics[scale=.6]{./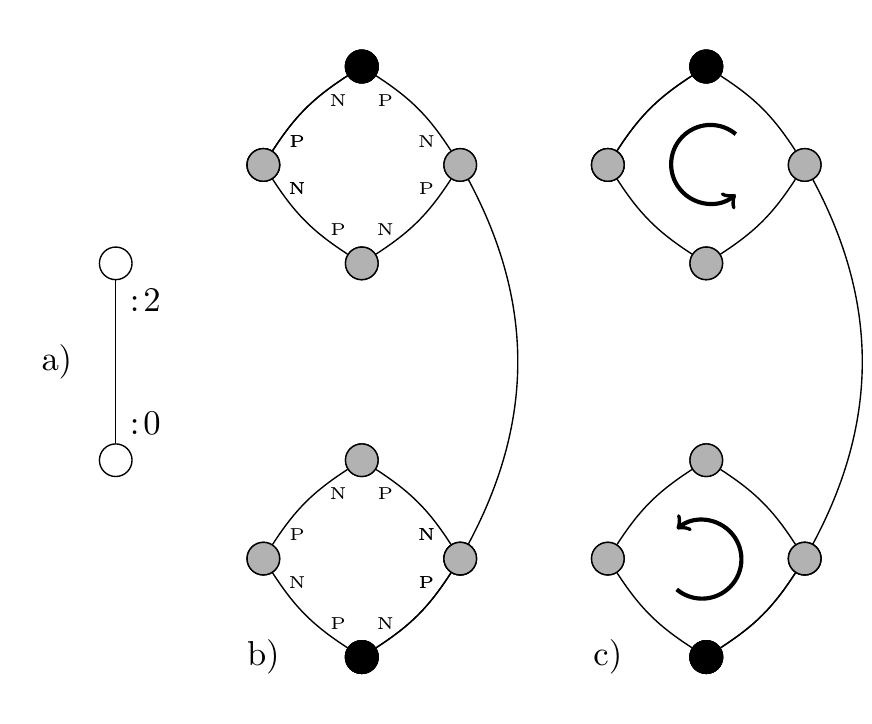}
\caption{\label{fig:vertices}Here, $\pi=\{0,1,2\}$. a) depicts a graph composed of two vertices connected through ports $2$ and $0$. b) represents the encoding of this graph. Each vertex is represented by $4$ vertices forming a ring. The darkest vertices have label $\op{VERTEX}$ while the gray vertices have label $\op{PORT}$. The ports used in the ring are $\op{(N)ext}$ and $\op{(P)revious}$. The edge linking the two vertices is represented by an edge between the third vertex of the first ring (representing port $2$ of the first vertex) and the first vertex of the second ring (representing port $0$ of the second vertex). Finally, c) presents a lighter representation of the same encoding where an arrow indicates the orientation of the rings. For the sake of clarity, this latter representation will be used in the following figures.}
 \end{center}
\end{figure}

\begin{lemma}[$E^{\mbox{\scriptsize graph}}_\pi$ is an effective encoding]
  Given $\pi$, $E^{\mbox{\scriptsize graph}}_\pi$ is continuous,
  shift-invariant and injective.
\end{lemma}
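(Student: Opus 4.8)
The plan is to verify the three required properties of $E^{\mbox{\scriptsize graph}}_\pi$ directly from its definition, exploiting the purely local and structure-preserving nature of the construction. For \emph{shift-invariance}, I would observe that the encoding is defined vertex-by-vertex and edge-by-edge in a way that only refers to the local data (the degree bound $\pi$, whether a vertex is a vertex or a port position, and the incidence of edges with ports). No use is made of the global names or of the position of the origin beyond the obvious rule that the origin of $X$ maps to a distinguished vertex (say $v_{|\pi|}$, the \textit{VERTEX}-labelled node) of its image ring. Hence, moving the pointer in $X$ from $\varepsilon$ to a neighbour $u$ and then encoding yields the same generalized Cayley graph as encoding first and then moving the pointer to the corresponding image vertex; this is exactly the commutation $E^{\mbox{\scriptsize graph}}_\pi(X_u) = (E^{\mbox{\scriptsize graph}}_\pi(X))_{u'}$ for the appropriate image vertex $u'$, which is what shift-invariance requires for a function between the two graph sets.

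For \emph{continuity}, I would use the standard characterization: $E^{\mbox{\scriptsize graph}}_\pi$ is continuous iff, for every radius $k$, there is a radius $k'$ such that the disk of radius $k$ of $E^{\mbox{\scriptsize graph}}_\pi(X)$ is determined by the disk of radius $k'$ of $X$. Since each vertex of $X$ is blown up into a ring of $|\pi|+1$ vertices and each original edge becomes a single new edge between two rings, a path of length $k$ in the image graph traverses at most $k$ original edges and some bounded amount of intra-ring movement; taking $k' = k$ (or $k'=k+1$ to be safe about the edges incident to the boundary) suffices, because the structure and labels inside $B_{k'}(X)$ completely fix the structure and labels inside $B_k(E^{\mbox{\scriptsize graph}}_\pi(X))$. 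This is a routine bookkeeping argument once the ring-size constant $|\pi|+1$ is fixed.

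For \emph{injectivity}, I would argue that $X$ can be reconstructed from $E^{\mbox{\scriptsize graph}}_\pi(X)$: the \textit{VERTEX}-labelled nodes are in bijection with the vertices of $X$; starting from a \textit{VERTEX} node and following \textit{next} ports one recovers the ordered ring $v_0,\dots,v_{|\pi|}$, so the port index $i$ of each \textit{PORT} node is recovered as its distance along the ring; and each \textit{neighbour}-edge $\{u_i:neighbour, v_j:neighbour\}$ in the image is recovered as the edge $\{u:i, v:j\}$ in $X$. Because the encoding is also manifestly deterministic, this reconstruction shows $E^{\mbox{\scriptsize graph}}_\pi$ is injective. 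The main obstacle, modest as it is, is getting the pointer/origin bookkeeping exactly right: one must pin down which node of the origin's ring plays the role of the origin in the image (naturally the \textit{VERTEX} node, so that the reconstruction above sees the origin of $X$), and check that shift-invariance is stated with respect to this convention; everything else is a direct unwinding of the definition.
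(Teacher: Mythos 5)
Your proposal is correct and follows essentially the same approach as the paper's (much terser) proof: locality of the construction gives continuity and shift-invariance, and the fact that all topological information is preserved---i.e., $X$ is reconstructible from its image---gives injectivity. Your version simply spells out the modulus of continuity and the reconstruction procedure that the paper leaves implicit.
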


\begin{proof}
The proof of this result is pretty straightforward. As $E^{graph}_\pi$ locally acts on the graph, continuity and shift-invariance are instantaneous. Moreover, any change in the original graph will result in a change in the encoded graph as all information on the topology is preserved.\qed
\end{proof}

\subsubsection{Local rule encoding}\label{subs:ruleencoding}

\noindent{\bf$[$General structure$]$} We need to encode any local rule
of radius $1$ without label into a subgraph. A rule of degree $|\pi|$
can be seen as an array of fixed length (the number of possible
neighborhoods) containing all the possible outputs of the local
rule. We choose to arrange all these outputs along a line graph
together with a description of the corresponding neighborhood. The
description of the neighborhoods is detailed in section
\ref{subs:universalrule}. Fig. \ref{fig:turtlelight} represents such
an encoding for the local rule inducing the turtle dynamics.
\begin{figure}[h]
\centering \includegraphics[scale=.6]{./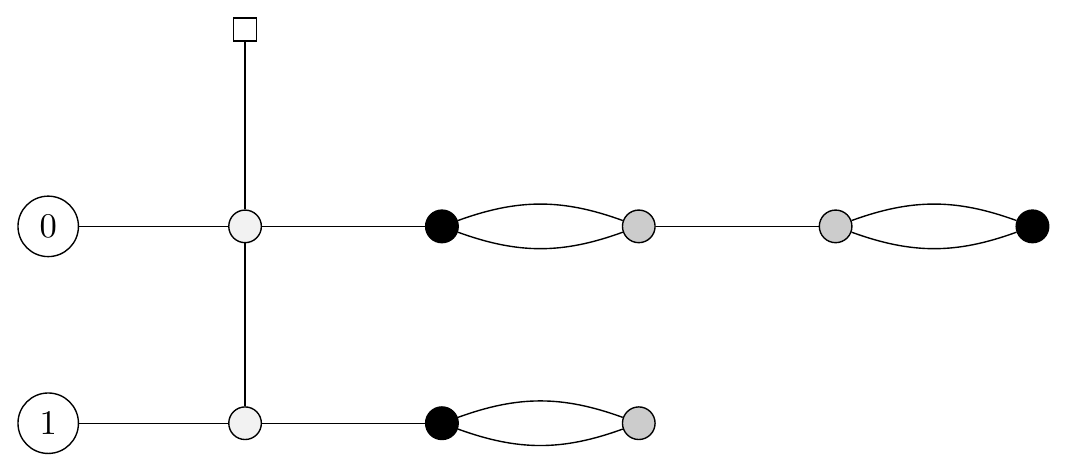}
\caption{\label{fig:turtlelight}Encoding of the turtle rule. Black vertices are vertices labeled by $\op{VERTEX}$, dark gray vertices are labeled by $\op{PORT}$. Light gray vertices are part of line structure onto which all outputs are attached. Vertices on the left of the vertical line are labeled by bits corresponding to the number of the outputs in an enumeration of all possible neighborhoods. We choose not to use the neighborhood encoding used in \ref{subs:universalrule}, as there are only $2$ different neighborhoods. The square vertex represents the top of the line structure.}
\end{figure}

\noindent{\bf$[$Addresses and identification$]$} We also need to identify a meta-vertex of an output to another meta-vertex in another output, in order to proceed to a graph union. This is done by adding to each vertex labeled by \textit{VERTEX} a line graph containing a path towards the other vertex. Fig. \ref{fig:turtle} represents the graph encoding the turtle local rule with these addresses. Fig. \ref{fig:infl} represents the graph encoding the inflating line local rule.

\begin{figure}[h]
\centering\includegraphics[scale=.6]{./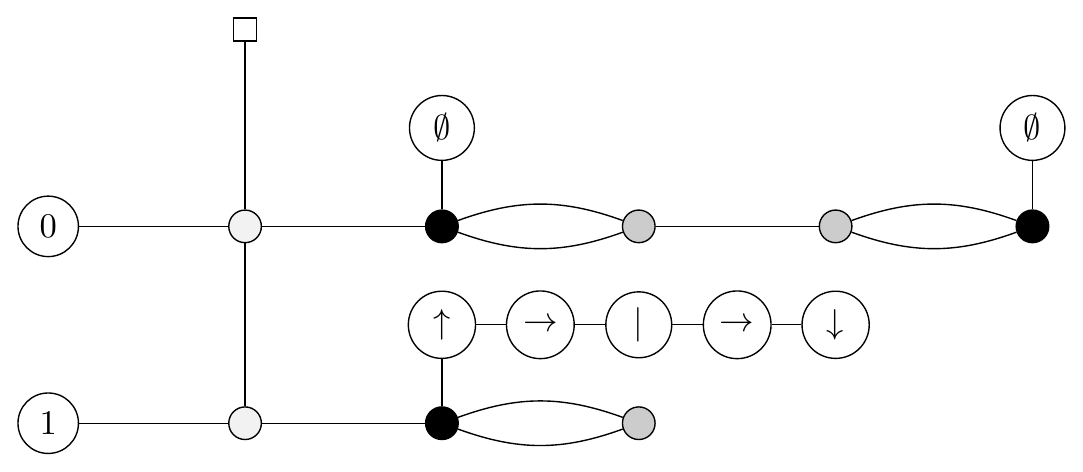}
\caption{\label{fig:turtle}Encoding of the turtle rule including the addresses. The empty set label is used to specify that the meta-vertex does not need to be identified to another meta-vertex. When the address is not empty, it is encoded in a line graph using $4$ different labels: $\uparrow$, $\rightarrow$, $\downarrow$ and $\vert$. $\uparrow$ indicates to move on the father meta-vertex. $\downarrow$ indicates to go down from a father meta-vertex to its output. $\rightarrow$ indicates to travel along the port $\op{NEXT}$ in a meta-vertex. $\vert$ indicates to travel along the port $\op{neighbor}$ between two meta-vertices.}
\end{figure}

\begin{figure}[h]
\begin{center}
\includegraphics[scale=0.4]{./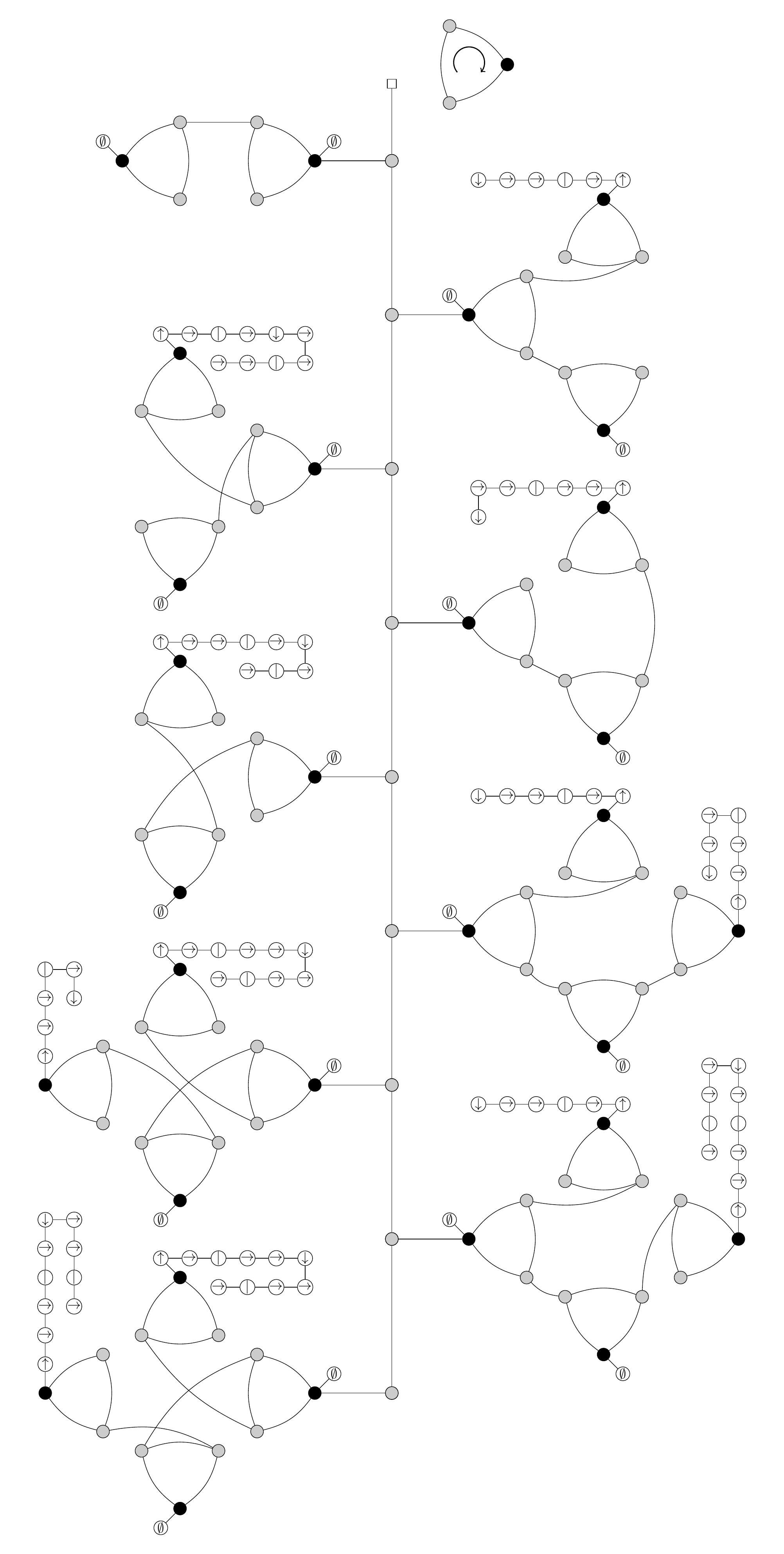}
\end{center}
\caption{\label{fig:infl} Encoding of the inflating line rule including the addresses. There exist $9$ different neighborhoods of radius $0$ on graphs of degree $2$, thus the presence of $9$ different outputs in the encoding. Numbering of the possible outputs are neglected here as they do not bring any information to the understanding of this encoding.}
\end{figure}
\noindent{\bf$[$Inheritors and disowned vertices$]$} Inside an output subgraph, there are two types of meta-vertices: the ones that need to receive a copy of the local rule graph and the others. We use a product label to mark the meta-vertices that will inherit of a copy of the local rule. In the example of the turtle, all meta-vertices are marked while in the example of the inflating line, only the meta-vertices having an empty address are marked.

\subsubsection{Description of an intrinsically universal rule}\label{subs:universalrule}
Applying a local rule to every vertex in a graph consists in several stages:
\begin{itemize}
\item[$(i)$] Each vertex observes its neighborhood
\item[$(ii)$] Each vertex deduces the output subgraph to be produced according to the local rule
\item[$(iii)$] A graph union of all these subgraphs is computed to output the final graph.
\end{itemize}
The universal local rule implements those three stages, with an additional stage:
\begin{itemize}
\item[$(ii)*$] The encoding of the local rule is duplicated into each meta vertex of the chosen output subgraph.
\end{itemize}
Moreover, a universal local rule must synchronize the simulation in
every meta-vertex in order to perform the graph union only when all
subgraphs are chosen and all duplications are over.

We detail how each of these stages are performed by the universal local rule.

\noindent{\bf$[$neighborhood observation$]$} First the meta-vertex
proceeds to generate a matrix of vertices of size $|\pi|+1$ to store
the connectivity of its neighborhood. A new vertex is attached to the
vertex labeled \textit{VERTEX} and starts moving along the ring of
vertices labeled \textit{PORT} growing the matrix in $2$
passes. Fig.~\ref{fig:matrix} describes this growing process on a
meta-vertex of degree $9$.

\begin{figure}[h!]
\begin{center}
\includegraphics[scale=0.5]{./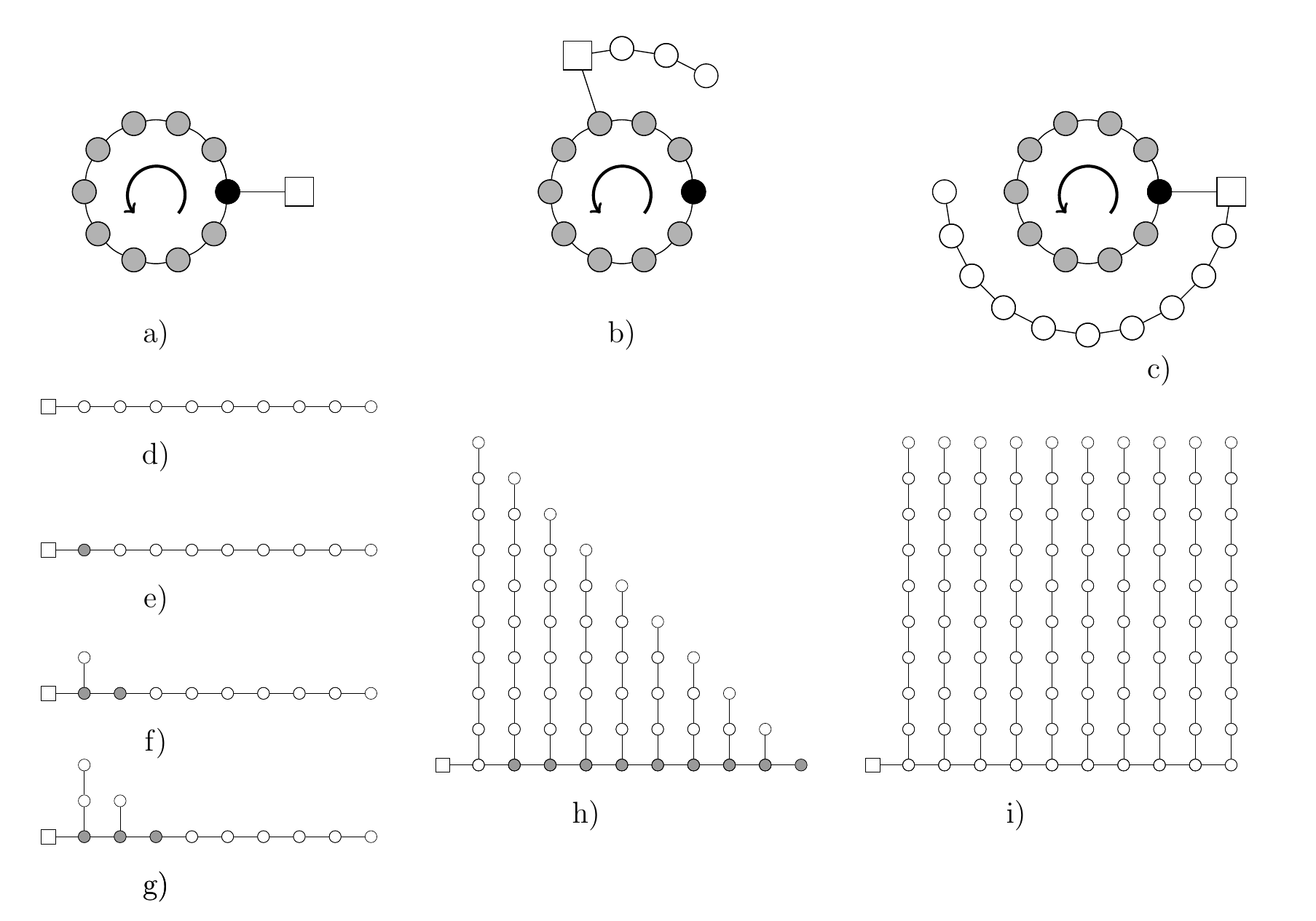}
\end{center}
\caption{\label{fig:matrix} Growth of the connectivity matrix. A ``machine'' vertex starts to run along the ring and for each vertex it passes, adds a new vertex to a line graph: a) At first the line graph is empty and the machine vertex is attached on the $\op{VERTEX}$ vertex. b) After three steps. c) After $10$ steps, the machine vertex is back on the first vertex and start the second pass. d,e,f,g represent the $4$ first steps of the second pass. The machine sends a signal (in gray) that triggers the growth of each column while moving along the ring. h) represents the $11^{\textrm{th}}$ step where the signal reaches the last column and the machine arrives at the $\op{VERTEX}$ label again. The machine sends an ``end'' signal to stop the growth of the columns. i) represents the final matrix (after $20$ steps).}
\end{figure}

\begin{SCfigure}
\centering
\caption{\label{fig:matrixex} The two matrices encoding resp.
  the neighborhood where no neighbor is present, and the
  neighborhood where another vertex is present on the only port. In
  the first graph, the bottom right vertex is crossed to indicate that
  there is no ``second'' vertex in the neighborhood.} 

\includegraphics[width=.65\textwidth]{./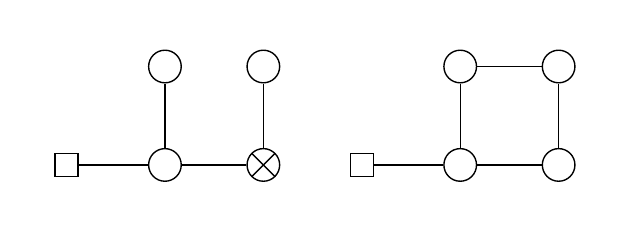}
   
  \end{SCfigure}

Once the matrix is built, the machine vertex starts a depth first search (DFS) of depth $1$ on the meta vertex it is attached on. It grows $2$ edges (or arms) that will travel in the graph and $4$ unary counters to keep track of the DFS status. The unary counters are line graphs of lengths $|\pi|+1$ and $|\pi|$. The two counters of length $|\pi|+1$ keep track of which meta-vertex can be found at the end of each arm while the two counters of length $\pi$ keep track of the ports currently considered. Fig.~\ref{fig:unarycounter} represents the structure of a counter, and Fig.~\ref{fig:DFSbig} describes the structure used to store the current state of the DFS. While visiting a vertex $u_i$, its port $p_j$ and a vertex $u_k$ and its port $p_l$, an edge is created between cells $(i,j)$ and $(k,l)$ of the matrix if the edge $\{u_i:p_j,u_k:p_l\}$ is present in the graph. Once the DFS is over, the matrix contains enough information to determine the neighborhood of the vertex. Fig. \ref{fig:matrixex} presents the two different matrices for neighborhoods in graphs of degree 1.

\begin{figure}
 \begin{minipage}[b]{.46\linewidth}
 	\begin{center}
	\includegraphics[scale=1.2]{./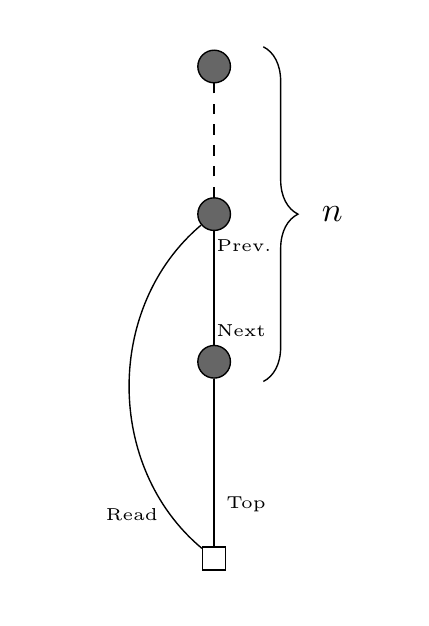}
	\end{center}
	\caption{\label{fig:unarycounter} A counter structure. It consists in a line graph of the appropriate length. The 	origin of the line can grow an arm to read the counter, one vertex at a time. It is easy for an automaton to 		grow a counter of the appropriate size by running along a meta-vertex and generating a new vertex for each visited port (see matrix generation). All vertices composing the counter have the same label.}
 \end{minipage} \hfill
 \begin{minipage}[b]{.46\linewidth}
  \begin{center}
\includegraphics[scale=.8]{./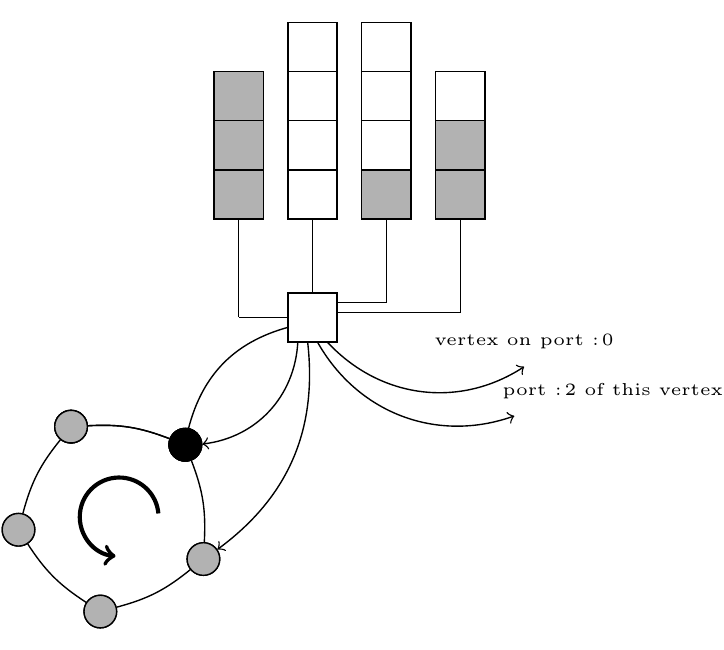}
\end{center}
\caption{\label{fig:DFSbig} DFS structure for the neighborhood exploration of a vertex of degree $|\pi|=4$. At the top: 4 unary counters structure. The DFS explores the neighborhood of the center vertex by considering every possible pair of vertices (including the center vertex, as there might be loops the graph). The two center counters are used to keep track of which vertices are currently being visited. The two smaller counters are here to keep track of which port is currently considered in each of these vertices. Here, the currently considered pair is composed of the ``center'' vertex and its neighbor of port $:\!0$ and their ports $:\!3$ and $:\!2$.}
 \end{minipage}
\end{figure}

\emph{Note on the description of the neighborhood.} The usage of a matrix to encode the neighborhood of a meta-vertex is the most general solution we can implement. However, in most of the cases, we do not need that much information. In the two examples we develop in this paper (the turtle and the inflating line), it is only required to test for the existence of a potential neighbor on each port of the meta-vertex, and the local rule does not require to know their complete connectivity. Hence, in both local rule encodings, we will use an ad hoc encoding of the neighborhoods. For the turtle rule, we will use a single bit to represent the two possible neighborhoods. For the inflating line, a string of bits is used. For each port of the vertex we proceed as follows: if there is no neighbor on this port we add a $0$ to the string. If there is a neighbor on the port, we add a $1$ to the string, followed by a $0$ or a $1$ depending on the port and the other end of the edge.

\noindent{\bf$[$Choosing the output subgraph$]$} After recording the local connectivity, the machine vertex starts to travel down the local rule encoding and compare this recording to the information attached to each output, stopping when the two are matching.

\noindent{\bf$[$Duplicating the local rule encoding$]$} The machine then initiates a DFS on the chosen output graph. The purpose of this DFS is to search for marked meta-vertices. During the DFS, every time a marked meta-vertex is met, the DFS is paused and a new DFS starts from the root of the local rule encoding. This new DFS will explore the local rule encoding while constructing a new copy of it. This can be done by maintaining a stack structure containing the path followed in the graph during the DFS. When the machine encounters an edge leading toward a previously visited vertex, it uses this stack to backtrack and find the right edge to create in the new version of the graph.
 These two DFS act on graphs of degree $3$ and $4$ and thus do not require the same counter structures as the DFS in the neighborhood observation stage, as everything can be stored using a bounded number of labels in the vertices. This new version of the local rule encoding is then attached to the marked meta-vertex and the first DFS is resumed.

\noindent{\bf$[$Graph union and vertex identification$]$} Once the DFS is over, meta-vertices of the output graph start moving in the graph according to the addresses attached to them. Meanwhile, the local rule encoding is reduced to get rid of all the unused outputs, leaving the only chosen output attached to the simulated meta-vertex.

\noindent{\bf$[$Merging two meta-vertices$]$} After moving according
to its address, a meta-vertex will meet its target meta-vertex and
they will try to merge. Notice that the target meta-vertex might also
try to merge with a third vertex, and so on, forming a sequence of
meta-vertices that must be merged in a single meta-vertex. This can
lead to two very distinct situation: either the sequence is not cyclic
or the sequence forms a cycle. In the first case, the first vertex of
the sequence will perform the merging, followed by the second and so
on until all the meta-vertices are merged as a single meta-vertex. In
the second case, no meta-vertex can decide to start the merging
process as every meta-vertex sees itself in the middle of the
sequence. Meta-vertices can easily decide whether this is the case by
growing a new edge whose extremity will travel along the sequence. If
the edge reaches the end of the sequence, then the merging process
will start. If not, a synchronization process will start.

\noindent{\bf$[$Synchronization process$]$} If synchronization is required during the merging process, then we can assume that these meta-vertices are synchronized (i.e. they decide to start the merging process exactly at the same time). Indeed, if they were not synchronized, then the symmetry could have been broken in the local rule encoding, as only the neighborhood of simulated meta-vertex has an influence on the time step at which meta-vertices of its local rule encoding decide to merge.
Two problems now arise:
\begin{itemize}
\item[$\bullet$] In order for the cycle to collapse in a single meta-vertex in a single time step, the universal local rule must be able to ``see'' the whole cycle, hence its radius must be of at least half the length of the larger possible identification cycle.
\item[$\bullet$] Meta-vertices are not composed of a single vertex. They contain at least $|\pi|+1$ vertices and might also contain a local rule encoding. All these vertices need to be simultaneously merged with their corresponding vertices in the previous and next meta-vertices in the merging cycle.
\end{itemize}
\noindent The first problem is easy to solve as we are constructing a family of intrinsically universal local rules. A given local rule  can only produce merging cycles of bounded lengths, hence will be simulated by one of our universal local rule. 

\noindent The second problem can be solved using a solution of a problem known as the Firing Squad Synchronization Problem (FSSP) over graph automata \cite{rosenstiehl1972intelligent,mazoyer1988overview}. The construction uses labels on the vertices of a graph in order to synchronize all the vertices using only local communications between vertices. Moreover, the solution only depends on the degree of the graph to synchronize. In our case, we need to synchronize meta-vertices and their local rule encodings, which are of bounded degree $4$. The identification process will be performed as follow:
\begin{itemize}
\item[$\bullet$] Meta-vertices will detect that they are in a cycle of identification
\item[$\bullet$] Meta-vertices start a FSSP on their main vertex
\item[$\bullet$] The FSSP synchronizes every vertex composing the meta-vertex and its potential local rule encoding
\item[$\bullet$] While propagating the FSSP, new edges are built between vertices of the meta-vertex and their corresponding vertices in the previous and next meta-vertices
\item[$\bullet$] When all vertices are synchronized, the universal local rule performs a merging of the vertices, leading to a single meta-vertex and its local rule encoding.
\end{itemize}
Figs.~\ref{fig:mergingseq} and \ref{fig:mergingsynch} describe the different possible cases of merging sequences and the synchronization process. When all mergings are performed, the original meta-vertices are destroyed, leaving only the new graph and can be restarted to simulate the next time-step.
Figs~\ref{fig:exampleturt} and \ref{fig:exampleturt2} describe the complete simulation of one time step of the turtle dynamics over the graph containing two vertices.

\begin{figure}
\vspace*{-8pt}
\begin{center}
\includegraphics[scale=.6]{./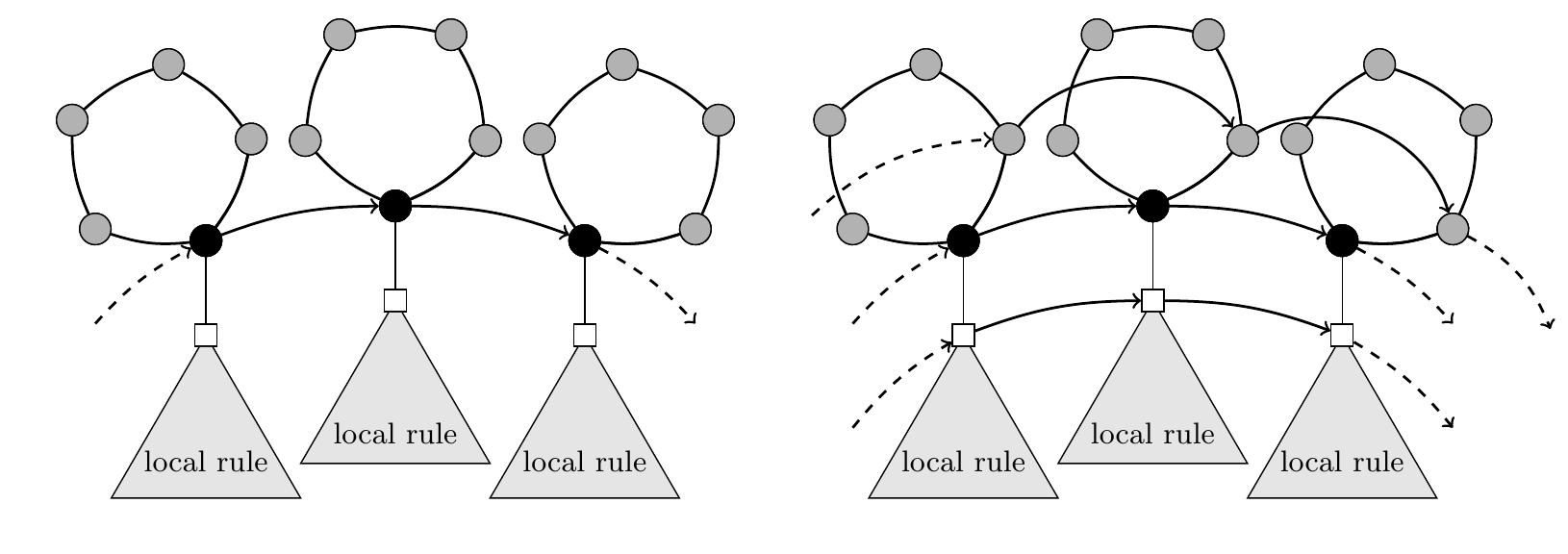}
\end{center}
\caption{\label{fig:mergingsynch}Synchronization process of three meta-vertices and their local rule encodings. All meta-vertices start a FSSP on their vertices. At the beginning, only the ``main'' vertex is connected along the merging cycle to the others ``main'' vertices (top graph). As the FSSP is propagated, the vertices connect themselves to their corresponding vertices in the previous and next vertices. The bottom graph describes the same graph, two propagation steps later. When the FSSP is completed, all vertices ``fire'' exactly at the same time and perform a merging along all the built cycles, resulting in a single meta-vertex and its local rule encoding.}
\end{figure}

\begin{SCfigure}
\includegraphics[width=.5\textwidth]{./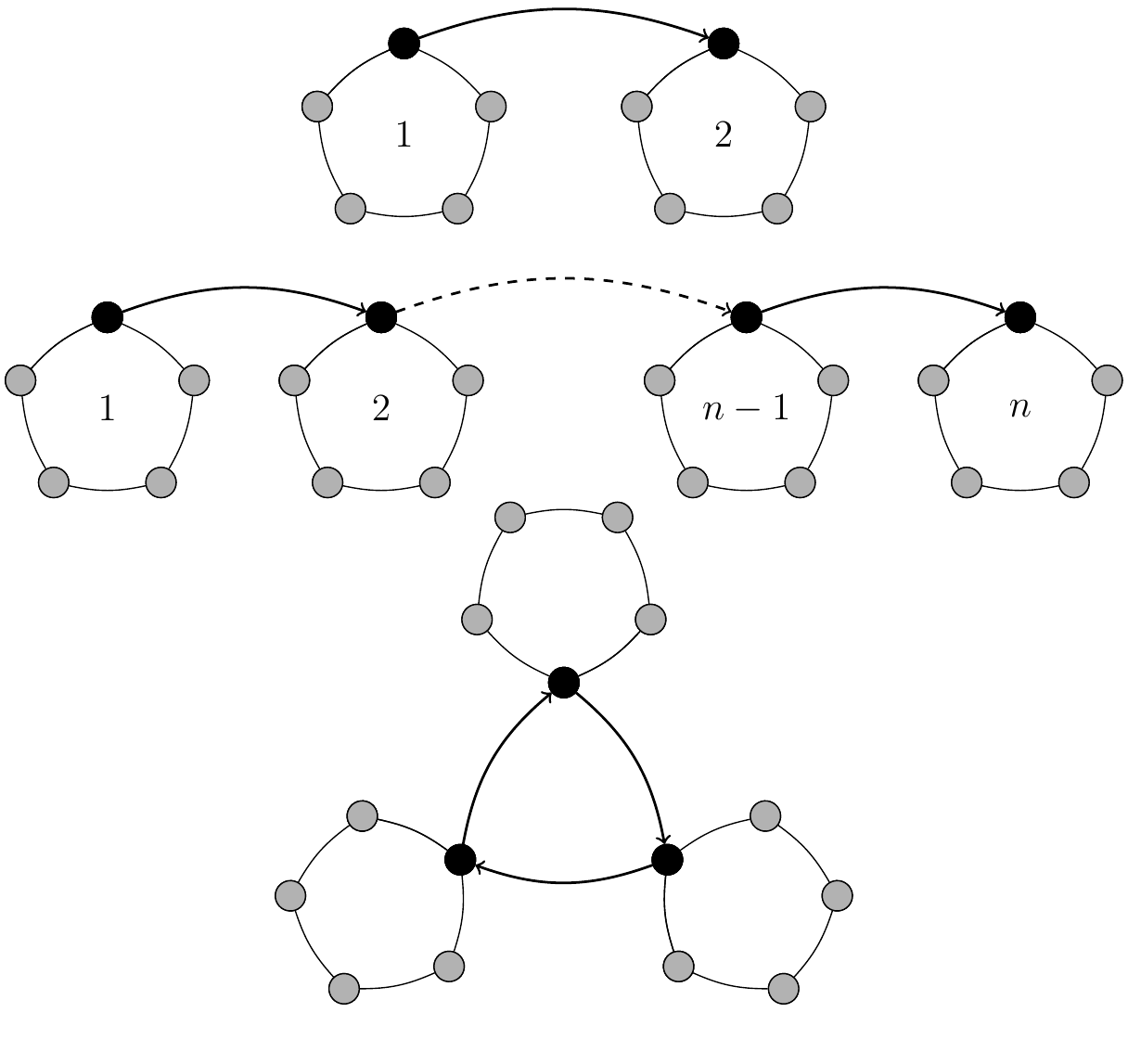}
\caption{\label{fig:mergingseq} Different types of merging sequences. The two top cases are solved by ordering the vertices according to the sequence, and then having the first one merge to the second one, and so on. In the last case, the sequence forms a cycle, and a synchronization is necessary to perform the simultaneous merging of the cycle.}
\end{SCfigure}
\begin{figure}
\vspace*{-8pt}
\centering\includegraphics[scale=.5]{./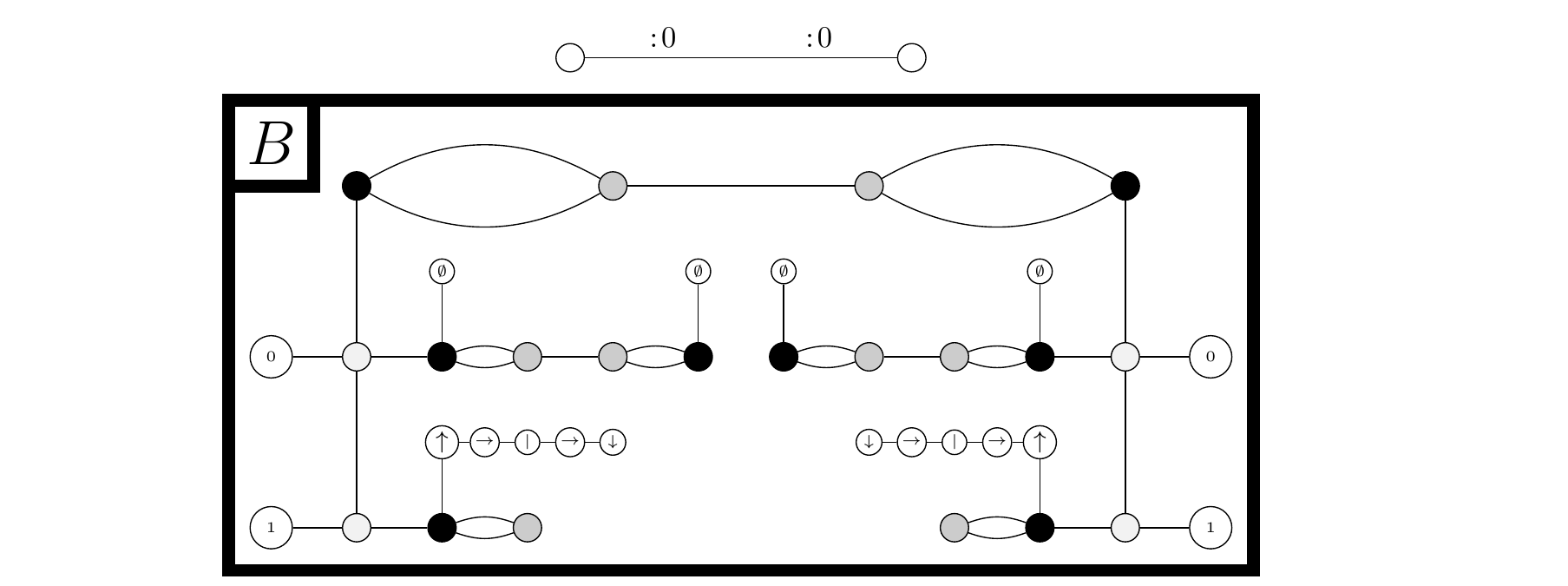}
\caption{\label{fig:exampleturt}A graph of degree $|\pi|=1$ and its encoding together with turtle local rule encodings. Each of the two meta-vertices receive a version the local rule encoding. Once again, the description of the different neighborhoods in the local rule encoding is not made using matrices, as there are only two possible cases, instead we used single digits. Here the neighborhood with only one single vertex is encoded by a $0$ while the neighborhood with two vertices is encoded by a $1$.}
\end{figure}

\begin{figure}[htb]
\begin{center}
  \begin{tabular}{@{}cc@{}}
   \hspace*{-1cm}{\bf (a)} &   \hspace*{-0cm} {\bf (b)} \\

  \hspace*{-1cm} 
    \includegraphics[scale=0.6]{./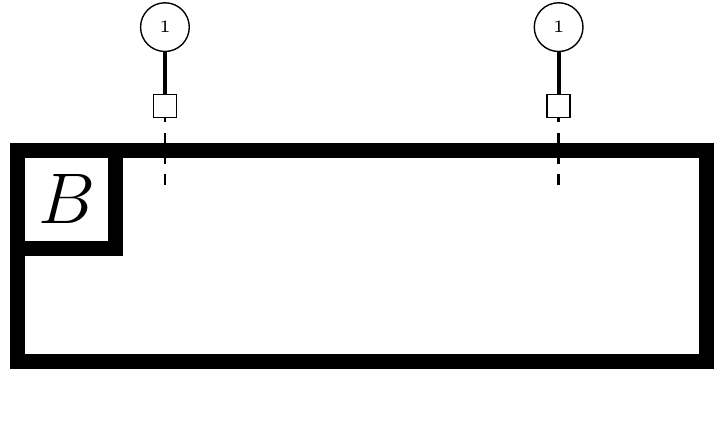} &
     \hspace*{-0cm} 
    \includegraphics[scale=0.6]{./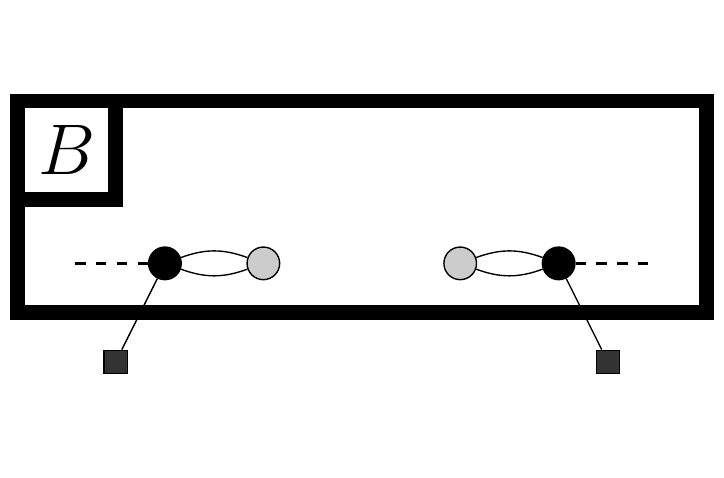} \\
    
    ~&~\\ ~&~\\
  \hspace*{-2cm}{\bf (c)} &   \hspace*{0cm} {\bf (d)} \\    
    
  \hspace*{-2cm} 
    \includegraphics[scale=0.6]{./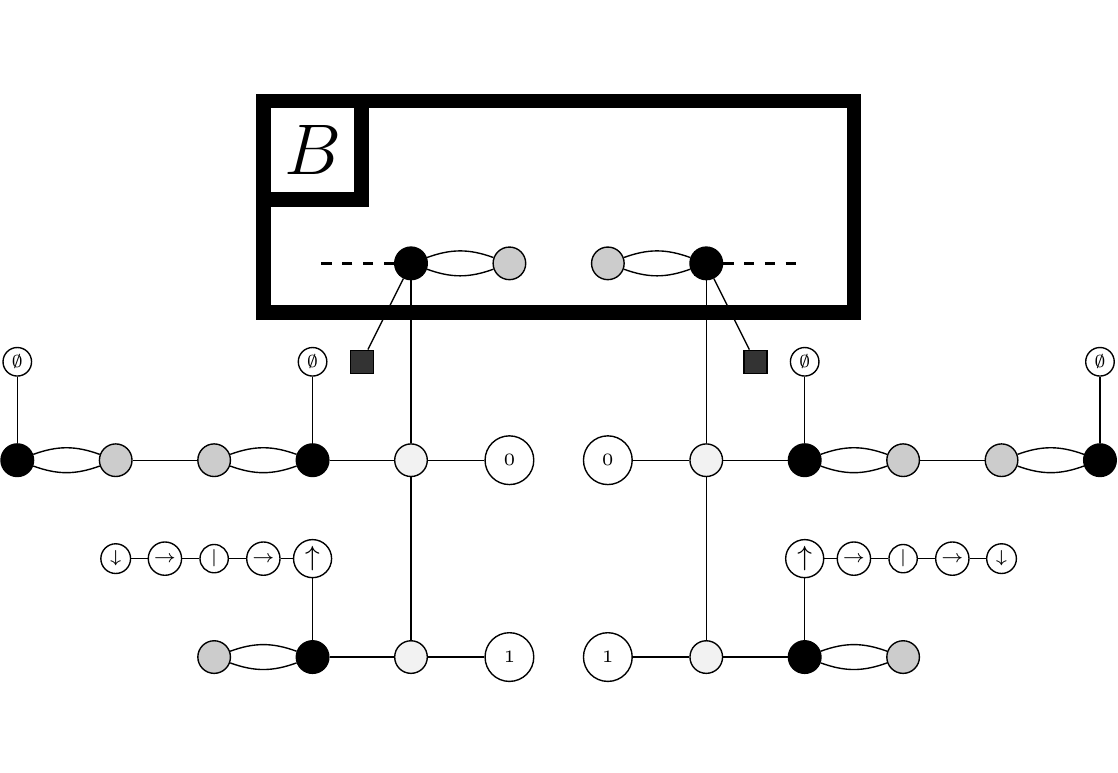} &
      \hspace*{-0cm} 
    \includegraphics[scale=0.6]{./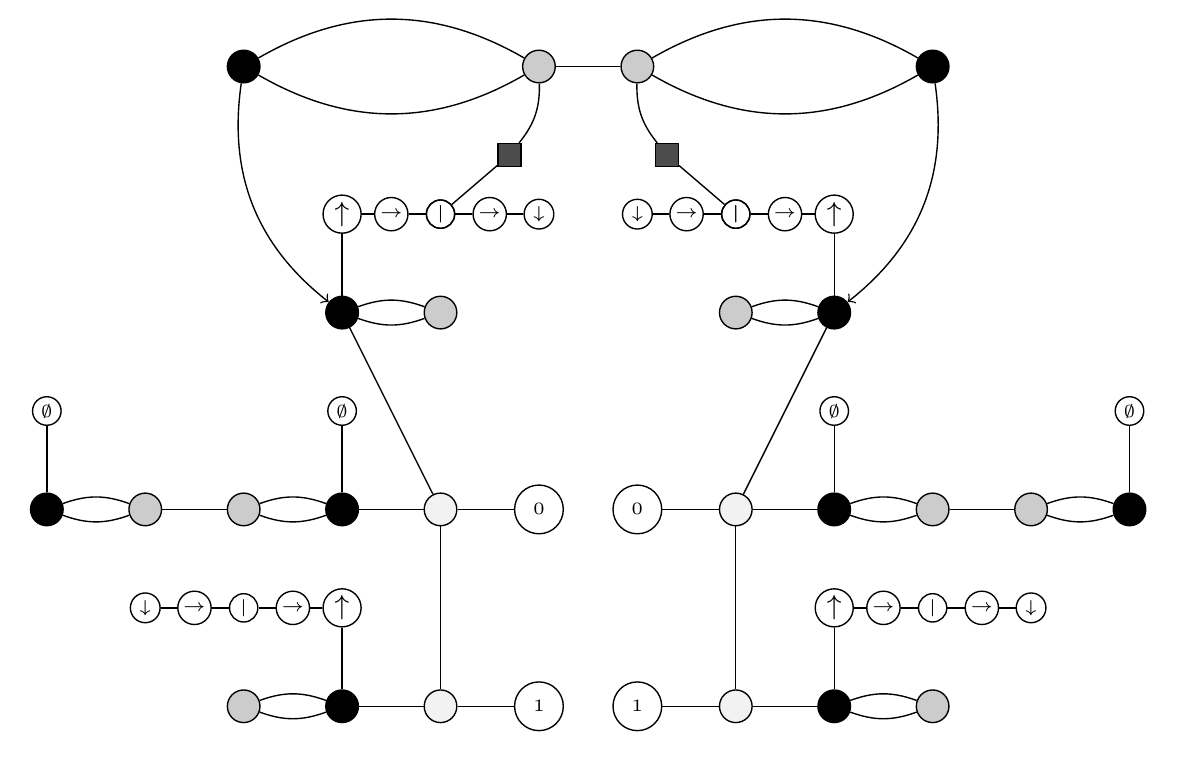} \\

  \hspace*{-2cm}{\bf (e)} &   \hspace*{-1cm} {\bf (f)} \\    
     ~&~\\
      \hspace*{-2cm} 
    \includegraphics[scale=0.6]{./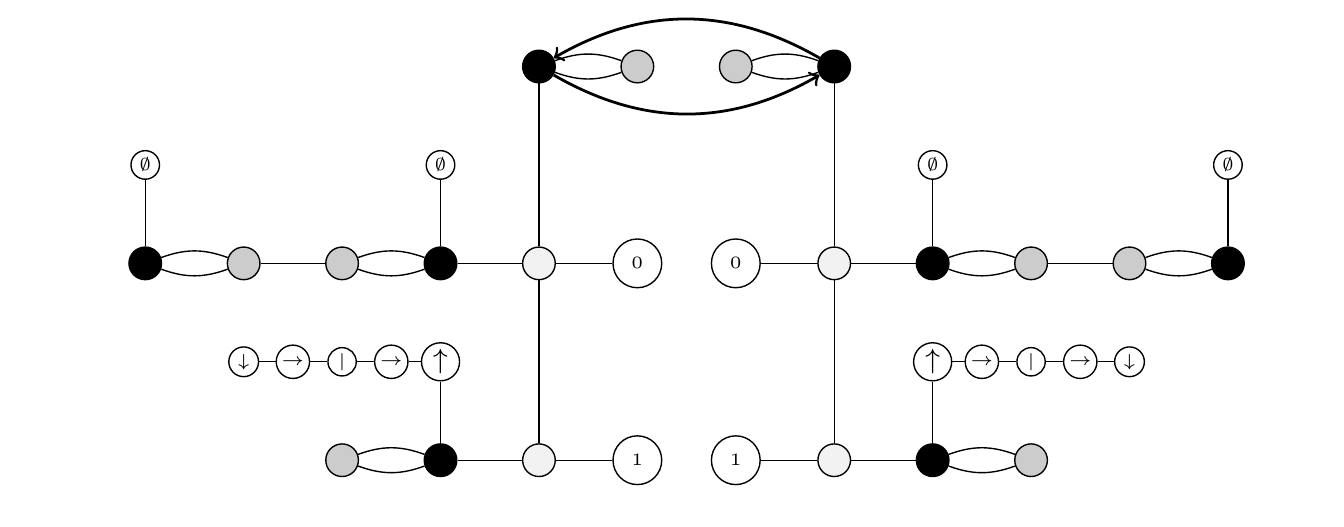}  &
      \hspace*{-1cm} 
    \includegraphics[scale=0.6]{./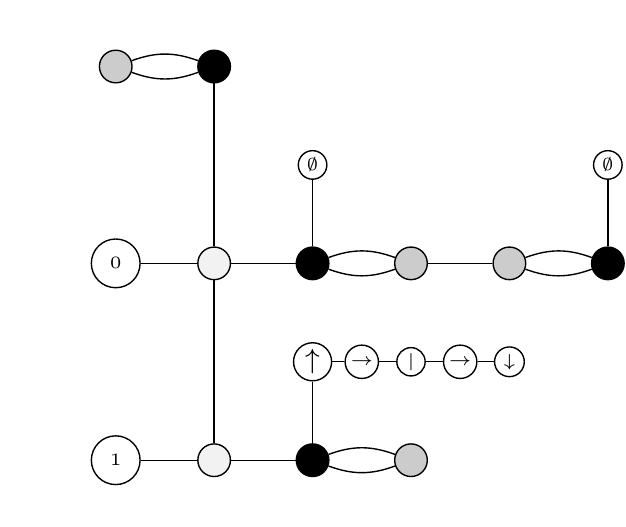}   
  \end{tabular}
 \end{center}
  \caption{\label{fig:exampleturt2}Steps of the simulation of the turtle local rule. (a) After the neighborhood exploration. The two automata attached to each meta-vertices have detected the presence of another meta-vertex in the neighborhood, and thus generated a vertex labeled $1$. (b) The automata traveled down the local rule and chose the second output subgraph. They will then start a DFS on the chosen subgraph. (c) the DFS detected a meta-vertex and decided to duplicate the local rule and attach a copy to it. (d) The DFS is over, the local rule is destroyed and the identification process is running. The automata is on his way to reach the meta-vertex at the end of the address attached to the meta-vertex of the output subgraph. $2$ symbols of the address have been read: $\uparrow$ and $\rightarrow$. (e) After reading the address the two meta-vertices are pointing toward each other and start the merging process. (f) After synchronization, the two meta-vertices and their local rule encodings are merged, and the simulation is over. To restart the simulation a new automaton can be attached to the meta-vertex.}
\end{figure}

  \subsection{Insight on the non-existence of an intrinsically universal instance}

The construction presented in section \ref{subs:universalrule} describes a family of intrinsically universal local rules, and not a single universal local rule. All local rules in this family act on the same set of graphs, and only differ in their radius. 
Having universal local rules with arbitrary large radius is only required in the last part of the construction, for the merging process. When meta-vertices decide to merge into a single meta-vertex, and the merging sequence forms a cycle, the local rule must be able to either order the meta-vertices and proceed to merge them one-by-one according to that order, or ``see'' all the meta-vertices, synchronize them, and proceed to the merging in one time step. The latter is only possible if the radius of the local rule is large enough, and that is the solution we adopted here. In the former however, we must order meta-vertices that are descendant of different vertices of the simulated graph. This requires to be able to unambiguously order the meta-vertices in any disk of radius 1 of our simulated graph. This is equivalent to have a clean coloring of the simulated graph. The coloring will then give us a way to totally order the descendants of the meta-vertices, and hence gives us a way to proceed to the merging without requiring any synchronization.

However, to use a clean coloring, we must prove that any local rule can be modified to take the coloring into account and maintain it over time. This, in turn, requires to be able to locally break the symmetries in the image graph, which might be impossible for some graphs.

Hence, it seems impossible to construct a unique intrinsically universal rule, at least using this type of constructions.

\section{Construction universality}\label{sec:construction}
In this section, we present an encoding of a graph into a string. We then present a way to reconstruct the graph from its encoding, using only local operation.

\subsection{Encoding the initial graph}\label{ssec:encodingconstruction}
The usual way to encode a graph into a linear structure is to perform a DFS on the graph while remembering the edges leading to any previously visited vertex. The alphabet we use to encode the initial graph of a dynamics of degree $|\pi|$ and of labels $\Sigma$ is the following:
$$ \mathcal{A}_{\pi}=\pi^2 \cup \{\$,;,|\} \cup\Sigma$$

With $\Sigma$ the finite set of labels of the vertices and $\{\$,;,|\}$ some arbitrary symbols used as delimiters (we need 3 of them). Figure \ref{fig:ex4} gives an example of the encoding of a graph with $\pi=\{1,2,3\}$.

\begin{figure}[!h]
\begin{center}
\includegraphics[scale=1.2]{./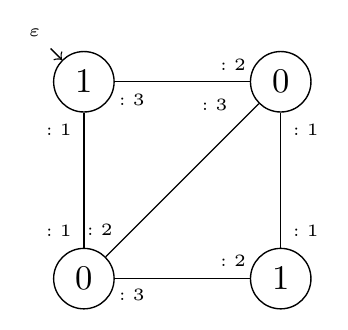}
\end{center}
\caption{Generalized Cayley graph with set of port $\pi=\{1,2,3\}$ and with labels $\Sigma=\{0,1\}$. The incoming arrow on the top-left vertex indicates the pointed vertex $\varepsilon$. Its encoding is the string: $\$1;(1,1)\$0;(2,3)\$0(2,3)||;(1,1)\$1(2,3)||;$\label{fig:ex4}}
\end{figure}

The string encoding the graph is a sequence of words, one for each vertex, describing the backward edges (i.e. the edges leading to one of the previously visited vertices) and the forward path leading to the next vertex visited by the DFS.
The structure of a word is described as follow:
$$ \$\sigma\;(i_1,j_1)\overbrace{|\ldots}^{n_1}(i_2,j_2)\overbrace{|\ldots}^{n_2}... ;(s_1,t_1)(s_2,t_2)\ldots (s_n,t_n)\ \$\text{next word}... $$
where:
\begin{itemize}
\item[$\bullet$] $\$$ plays the role of word separator.
\item[$\bullet$] $\sigma\in \Sigma$ is the label of the vertex.
\item[$\bullet$] $(i,j)\overbrace{|\ldots}^{k}$ describes the existence of an edge from port $i$ of the current vertex to port $j$  of the $k^{th}$ vertex when backtracking in the DFS.
\item[$\bullet$] $;$ is a separator between the backward edges and the forward path.
\item[$\bullet$] $(s_1,t_1)(s_2,t_2)\ldots (s_n,t_n) $, with $(s_i,t_i)\in \ports^2$, describes the path from the current vertex to the next vertex in the DFS.
\end{itemize}
The backwards edge $(i,j)\overbrace{|\ldots}^{k}$ is described using a unary description of the number of times a backtracking has to be made in the DFS. This unary encoding is here to simplify the functioning of the universal machine described in the next section and does not change the time complexity of the construction of the graph.
Notice that, as all our graphs are generalized Cayley graphs, they are pointed and thus this encoding is unique (the root of the DFS being the empty path $\varepsilon$). This encoding is not the only way to encode a graph in a string and maybe not the most efficient way but it conveniently fits to our needs while being easy to describe. We could have equivalently defined our encoding using a Breadth First Search algorithm instead of a DFS without changing the complexity of the encoding.

This encoding is both injective and computable (it simply consists in a DFS).

In the following, the encoding of a generalized Cayley graph $X$ in a linear graph is written $\langle X\rangle$. Using the construction of the previous section, we can also define the encoding of a local rule $f$ in a linear graph: $\langle f\rangle$. It consists in two successive encoding: first the local rule is encoded into a graph of $\markedgeng{_u}$ which, in turn, is encoded into a linear graph.

\subsection{Universal machine}
The universal machine we design is implemented in the model itself. It consists in a single vertex to which the two encodings $\langle X\rangle$ and $\langle f\rangle$ are connected.

The universal machine can now read the string $\langle X\rangle$ describing $X$ in order to build the graph. While building it, each vertex receives a copy of the local rule encoding.
The universal machine is very similar to the machine performing the stage of duplication of the local rule encoding of the universal dynamics described in subsection \ref{subs:universalrule}. However, its functioning is simpler as the input consists in an adequate description of the graph to build instead of the graph itself.

The universal machine itself is a vertex of degree $7$:
\begin{itemize}
\item Three edges for the inputs: one for $\langle X\rangle$ (read only) and two for $\langle f\rangle$ (read and top),
\item Two edges for manipulating the graph being constructed (one pointing at the last added vertex and another traveling along the DFS tree to create backward edges),
\item Two edges for manipulating a stack used to store the sequence of paths linking to consecutive vertices in the DFS (one pointing at the start of the stack, the other reading it),
\end{itemize}

The execution of the universal machine is described by a simple local rule acting as the identity everywhere except in the neighborhood of the machine itself. The machine proceeds as follow:
\begin{itemize}
\item If the symbol $\$$ is read, followed by $\sigma$, it adds the label $\sigma$ in the last added vertex,
\item If $(i,j)\overbrace{|\ldots}^{k}$ is read, the machine uses the stack to backtrack $k$ times in the DFS tree and add the edge $(i,j)$ between the last added vertex and the vertex reached at the end of the backtracking.
\item If the letter $;$ is read, all backward edges have been added,
\item After letter $;$, the machine read the $(s_i,t_i)$ one by one, following the described path, whose last edge has to be created together with a new vertex.
\end{itemize}

\section{Uniformity}\label{sec:uniformity}

The ultimate goal of universality (whichever we might consider), is to produce a single instance of the model able to simulate any other instance of a target model by just reading its description as an input. While it is possible to achieve such a goal for some models such as Turing machines for Turing universality, some other models, due to their design will never achieve this precise definition of universality. Non-uniform models, such as boolean circuits are among those. Indeed, boolean circuits are of fixed size and can only process inputs of a given length. Hence, instead of describing a unique universal circuit, we need to describe a family $\{C_n,n\in\mathbb{N}\}$ of circuits, such that $C_n$ processes inputs of size $n$. In addition, it is often required that this family verifies a property of uniformity \cite{val76,badiga2}. Uniformity is used to define classes of languages that correspond to machine based classes.

\begin{definition}[Uniformity (Boolean circuits)]
A family of boolean circuits $\{C_n,n \in \mathbb{N}\}$ is polynomial-time (resp. logarithmic space) uniform if there exists a deterministic Turing machine $D$ such that:
\begin{itemize}
\item $D$ runs in polynomial time (resp. logarithmic space),
\item For all $n\in\mathbb{N}$, $D$ outputs a description of $C_n$ on input $1^n$
\end{itemize}
\end{definition}

As we provided in section \ref{sec:intrinsic} a family of intrinsically universal local rules $\{f_d,d\in\mathbb{N}\}$, such that $f_d$ simulates every local rule defined on graphs of bounded degree $d$, it seems natural to require a form of uniformity in the construction of those local rules.

\begin{theorem}[Uniformity of the family $(f_d)$]
There exists a Turing machine $D$ such that:
\begin{itemize}
\item $D$ runs in polynomial time,
\item $D$ produces a description of an encoding of $f_d$ on input $1^d$.
\end{itemize}
\end{theorem}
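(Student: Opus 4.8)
The plan is to exhibit an explicit algorithm for $D$ and argue that each of its phases runs in time polynomial in $d$, then invoke the (already-established) fact that the family $(f_d)$ is built from a finite set of "gadget" sub-procedures whose size and behaviour depend on $d$ only through simple, polynomially-bounded parameters. First I would recall from Section~\ref{sec:intrinsic} that $f_d$ is entirely determined once one fixes: the universal set of graphs $\mathcal{X}_{\pi_u,\Sigma_u,\Delta_u}$ (which is a fixed finite datum independent of $d$), the radius of $f_d$, and the finite collection of local transition cases implementing the stages $(i)$, $(ii)$, $(ii)*$ and $(iii)$ described in Section~\ref{subs:universalrule}. The only $d$-dependent ingredients are (a) the sizes of the unary counters and the connectivity matrix used in the neighborhood-observation stage, which are line/grid structures of size $O(d)$ and $O(d^2)$; (b) the maximal length of an identification cycle a degree-$d$ rule can produce, which bounds the radius of $f_d$ and is itself a polynomial (indeed linear, up to the bound $b$ on image size) function of $d$; and (c) the finitely many FSSP-based merging cases, whose solution depends only on the bounded degree $4$ of meta-vertices and is therefore $d$-independent. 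So $D$ on input $1^d$ must: parse $d$; write down the fixed preamble (ports $\pi_u$, labels $\Sigma_u$, edge labels $\Delta_u$, and the $d$-independent transition cases) verbatim from a constant-size table hard-wired in $D$; compute the radius $r(d)$ as a polynomial in $d$; and emit the $d$-parametrised transition cases by running through the $O(\mathrm{poly}(d))$ possible local configurations (counter states, matrix cells, DFS-stack symbols) and printing, for each, the successor configuration according to the uniform description already given informally in the text.

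The key steps, in order, are: (1) Fix once and for all a concrete encoding of local rules as strings (the encoding $\langle f_d\rangle$ of Section~\ref{ssec:encodingconstruction}, i.e. the graph-of-$\markedgeng{_u}$ then linearised), so that "a description of an encoding of $f_d$" is an unambiguous target object. (2) Observe that $f_d\in\mathcal{F}_{\pi_u,\Sigma_u,r(d),b(d)}$ with $|\pi_u|$, $|\Sigma_u|$, $|\Delta_u|$ all constants and $r(d),b(d)$ explicitly polynomial in $d$; hence the full transition table of $f_d$ has size bounded by $|\mathcal{X}^{r(d)}_{\pi_u,\Sigma_u}|$ times $b(d)$, which --- and this is the only subtle point --- is \emph{not} automatically polynomial, since the number of disks of radius $r(d)$ is exponential in $r(d)$. (3) Therefore $D$ must not enumerate the transition table naively; instead it emits a \emph{finite program} (a bounded-size list of pattern/action rules, parametrised by the counter lengths) that \emph{computes} $f_d$ locally. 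This is exactly what the informal description in Section~\ref{subs:universalrule} is: a uniform, $d$-independent set of behavioural rules for the "machine vertex", plus $d$-sized auxiliary structures. So the right target for $D$ is this program-plus-parameters description, not an explicit lookup table; and producing it amounts to copying a constant-size text and filling in a handful of integers (and their unary expansions) computed from $d$, which is clearly polynomial-time. (4) Verify that every $d$-dependent numerical parameter $D$ must output --- counter length $d+1$ and $d$, matrix side $d+1$, radius $r(d)$, image bound $b(d)$, the DFS-stack depth bound --- is given by an explicit formula in $d$ and can be written down (in binary or unary, as the encoding dictates) in time $\mathrm{poly}(d)$.

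The main obstacle I expect is precisely the tension flagged in step~(3): the naive reading of "a description of $f_d$" as its transition table would make the theorem \emph{false}, because that table is exponential in the radius $r(d)$, which is itself roughly linear in $d$. The proof therefore hinges on pinning down that the intended description is the compact, algorithmic one --- a bounded-size local-rule \emph{scheme} together with the polynomially-many parameters instantiating its counters and radius --- and on checking that this scheme genuinely captures $f_d$, i.e. that the construction of Section~\ref{sec:intrinsic} really is uniform in $d$ modulo those parameters (in particular that the FSSP component, by the results of \cite{rosenstiehl1972intelligent,mazoyer1988overview}, depends only on the fixed degree $4$ and not on $d$, and that the graph-encoding $E^{\mathrm{graph}}_\pi$ of Section~\ref{subs:graphencoding} contributes nothing $d$-dependent beyond the degree-$3$ target). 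Once that interpretation is fixed and that uniformity check is carried out, the remaining verification that $D$ runs in polynomial time is routine: it is a constant amount of copying plus a constant number of arithmetic operations and unary expansions on inputs of size $d$.
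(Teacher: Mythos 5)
Your proposal is correct and follows essentially the same route as the paper: both arguments hinge on (a) replacing the exponentially large transition table by a compact pattern/action description of $f_d$, and (b) observing that the construction is $d$-independent in every stage except the final identification step, whose cycle-detection cases contribute only $O(d)$ additional, polynomial-time-computable entries. The paper phrases the compact description as a list of pairs whose left-hand sides are subgraph patterns (``each neighborhood containing this subgraph produces this output'') rather than your ``program plus parameters'' formulation, but this is a cosmetic difference, and your explicit flagging of why the naive lookup table would break polynomiality is exactly the point the paper's ``compact description'' is designed to address.
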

\begin{proof}

$D$ will produce a sequence of pairs of graphs. Each pair will consist in a neighborhood and its image through $f_d$. To produce a compact description of the rule, the left hand side of each pair will simply consist in the standard representation of a subgraph of a neighborhood, to be understood as : each neighborhood containing this subgraph will produce the corresponding output graph.

All rules $f_d$ behave the same in every step of the simulation except in the last step of identification, where $f_d$ has to detect identification cycles of length $d$. Hence, $D$ can output a constant initial sequence of pairs that represent the behavior of any $f_d$ in the first steps of the simulation. Now remains to prove that the description of the remaining step can be generated in polynomial time given $1^d$ as an input. In the last step of the simulation, a $f_d$ must detect cycles of length at most $d$. It suffices to give $d$ pairs associating to each cycle the corresponding graph, which simply consists in a single merged vertex. This can be done in polynomial-time. Hence, the result. \qed

\end{proof}

\section{Arithmetisation}
Arithmetisation, in the sense of \cite{mayo78,smi94}, authorizes us to use natural numbers exactly in the same way as causal graph dynamics and thus to obtain CGD working on natural numbers exactly as if those natural numbers were CGD.

\begin{lemma}\label{lem:arithG}
There is an effective arithmetisation of graphs.
\end{lemma}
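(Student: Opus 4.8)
The plan is to exhibit an explicit, computable bijection (or at least a computable injection with decidable image) between the set of generalized Cayley graphs $\mathcal{X}_{\pi,\Sigma,\Delta}$ and a subset of $\mathbb{N}$, and to check that the graph operations we actually use in the paper — shift, restriction to a disk, union, prefixing, and application of a local rule — become computable functions on the corresponding codes. First I would restrict attention to \emph{finite} generalized Cayley graphs (the arithmetisation is meant to let a CGD manipulate numbers, so we only need codes for the finite objects that appear as disks, images of local rules, and finite configurations); countable ones are handled as limits of their disks in the usual way, which is exactly how the model already treats infinite configurations. For a finite graph, I would reuse the string encoding $\langle X \rangle$ already constructed in Section~\ref{ssec:encodingconstruction}: a DFS-based word over the finite alphabet $\mathcal{A}_\pi = \pi^2 \cup \{\$,;,|\} \cup \Sigma$ (and analogously with $\Delta$ for edge labels). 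Since $\langle \cdot \rangle$ is stated there to be injective and computable, composing it with any standard computable bijection between finite words over a fixed finite alphabet and $\mathbb{N}$ (e.g. bijective base-$|\mathcal{A}_\pi|$ numeration) yields an injective computable map $\mathcal{X}_{\pi,\Sigma,\Delta}^{\mathrm{fin}} \hookrightarrow \mathbb{N}$, which is the arithmetisation.

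The key steps, in order, would be: (1) fix the finite alphabet and recall that $\langle X\rangle$ is a well-defined, injective, computable encoding of a finite pointed graph, using the pointedness of generalized Cayley graphs to make the DFS root canonical and hence the code unique; (2) compose with a computable word-to-integer bijection to land in $\mathbb{N}$; (3) show the image is decidable, i.e. give a polynomial-time parser that checks whether a given string is a syntactically valid and semantically consistent DFS encoding (ports used at most once per vertex, backward-edge targets in range, the forward path actually reaching a fresh vertex, connectivity) — this is routine but must be stated so that ``arithmetisation'' means we can recognise legal codes; (4) verify that the basic operations are computable on codes: decoding to an explicit adjacency structure, re-rooting (shift $X\mapsto X_u$) by recomputing the DFS from the new origin, truncation to a disk $X^r$ by a bounded breadth-first exploration, prefixing $u.G$ by a trivial relabelling, and the consistency test / union $G\cup H$ by comparing the decoded structures; (5) conclude that, since local rules have finite domain description (by the bound $b$ and finite radius) and $f(X^r_u)$ is a finite graph, $f$ itself and the localizable global map restricted to any disk are computable on the arithmetised representation, which is precisely the statement that natural numbers can ``play the role of'' graphs for CGD.

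The main obstacle I expect is not any single computation but getting the \emph{canonicity} of the code exactly right: a generalized Cayley graph is an isomorphism class, so the encoding must be invariant under the (trivial, since pointed) isomorphisms yet must still break all remaining symmetry deterministically. The paper already fixes this by choosing, for each vertex, its lexicographically least representative path from the origin and by running a deterministic DFS from $\varepsilon$; I would lean on exactly that convention and on the remark that ``a vertex and a path representing it will no longer be distinguished''. The one genuinely delicate point is the infinite case: for a countable generalized Cayley graph one cannot write down a single integer, so the honest statement of ``arithmetisation'' there is that the map $r \mapsto$ code of $X^r$ is computable and that $X$ is recovered as the limit, matching the way the global function $F$ is computed disk-by-disk in \cite{ArrighiCayleyNesme}; I would state the lemma for finite graphs (which is all later sections need) and add one sentence explaining this limiting extension, rather than attempting a uniform single-number code for infinite configurations.
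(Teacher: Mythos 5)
Your proposal takes essentially the same route as the paper: the paper's argument is exactly to reuse the DFS string encoding $\langle X\rangle$ of Section~\ref{ssec:encodingconstruction} and then number the syntactically correct encodings in lexicographic order, which is a minor variant of your word-to-integer step. The additional material you supply (decidability of the set of valid codes, computability of shift/disk/union on codes, the caveat about infinite configurations) is not in the paper's one-line proof but is consistent with it and only elaborates the same construction.
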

It suffices to consider the encoding described in section \ref{ssec:encodingconstruction}. Then order lexicographically all the syntaxically correct encodings. Associate to the minimum element, integer $0$, to its successor integer $1$, and so forth.

\begin{lemma}
There is an effective arithmetisation of local rules.
\end{lemma}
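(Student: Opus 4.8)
The plan is to mirror the proof of Lemma~\ref{lem:arithG}, reducing the arithmetisation of local rules to the arithmetisation of graphs that we have just established. Recall from section~\ref{ssec:encodingconstruction} that the notation $\langle f\rangle$ was already defined: a local rule $f$ is first encoded into a graph of $\markedgeng{_u}$ (via the rule encoding of section~\ref{subs:ruleencoding}, which arranges the finitely many neighborhood/output pairs along a line graph, together with the addresses needed for vertex identification), and that graph is then encoded into a linear graph using the string encoding of section~\ref{ssec:encodingconstruction}. So the composite map $f\mapsto\langle f\rangle$ is an injective, computable encoding of local rules into linear graphs, hence into generalized Cayley graphs.

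First I would note that by the Corollary ``Radius $1$ label free is universal'' (and Lemmas~\ref{lem:1} and~\ref{lem:2}), it is enough to arithmetise the local rules of radius $1$ with empty label sets; but in fact this reduction is not even needed here, since the rule encoding $f\mapsto\langle f\rangle$ works for an arbitrary local rule of parameters $(|\pi|,\Sigma,r,b)$: it is just a bigger finite table. The key point is that, for fixed parameters, a local rule is a finite object (a partial function from the finite set $\mathcal{X}^r_{\pi,\Sigma}$ of disks to the finite set of admissible image graphs of size at most $b$), so it can be written down as a finite string, and the consistency conditions in the definition of local rule are decidable, so one can effectively enumerate all syntactically well-formed and consistent rule descriptions.

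Then I would carry out the same three-step recipe as in Lemma~\ref{lem:arithG}. Step one: fix an effective enumeration of all parameter tuples $(|\pi|,\Sigma,r,b)$ and, for each, an effective enumeration of the finitely many candidate local rules with those parameters, discarding those that fail the (decidable) local-rule axioms; concatenating gives a computable enumeration of all local rules. Step two: apply the composite encoding $f\mapsto\langle f\rangle$, obtaining for each rule a linear graph, i.e. an element of $\mathcal{X}_{\pi_u,\Sigma_u,\Delta_u}$, and observe this map is injective and computable. Step three: compose with the arithmetisation of graphs from Lemma~\ref{lem:arithG} --- order the syntactically correct string encodings lexicographically and assign $0,1,2,\dots$ --- so that each local rule receives a natural number, and conversely from a natural number one can decode the string, check it is the $\langle\cdot\rangle$-image of a valid rule encoding, and recover $f$. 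This yields an effective bijection between local rules and a computable subset of $\mathbb{N}$, which is the desired arithmetisation.

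The only delicate point, and the one I would spend a sentence or two justifying, is effectiveness in both directions: that the set of strings that actually arise as $\langle f\rangle$ for some genuine local rule $f$ is decidable, not merely recursively enumerable. This follows because, given a candidate linear graph, one first decodes it (the string encoding of section~\ref{ssec:encodingconstruction} is injective and computable, so decoding is effective), checks that the resulting graph in $\markedgeng{_u}$ has the shape produced by the rule encoding of section~\ref{subs:ruleencoding} (a line of neighborhood/output blocks with well-formed addresses), reads off the implied finite table, and finally verifies the four consistency conditions of the local-rule definition, each of which quantifies only over finitely many disks of bounded radius and is therefore decidable. Hence membership in the image is decidable and the inverse map is computable, completing the arithmetisation. \qed
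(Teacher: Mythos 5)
Your proof is correct, but it takes a somewhat different route from the paper's. The paper's argument is a two-sentence reduction: a local rule \emph{is} a finite table of graphs, so one orders local rules by the product order induced by the total order on graphs from Lemma~\ref{lem:arithG} together with the lexicographic order on the finite sequences representing the tables; the rule thus becomes a finite sequence of integers, and these sequences are enumerated directly. You instead package the entire rule into a \emph{single} graph via the composite encoding $f\mapsto\langle f\rangle$ from Sections~\ref{subs:ruleencoding} and~\ref{ssec:encodingconstruction}, and then apply the graph arithmetisation once to that graph. Both reductions bottom out in Lemma~\ref{lem:arithG} and both are sound; the paper's is more economical because it never needs the machinery of the rule-to-graph encoding (line structure, addresses, inheritor marks), only the bare fact that the table is a finite sequence of graphs. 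What your version buys is that it explicitly addresses the points the paper leaves implicit: injectivity and computability of the encoding, decidability of membership in the image (so the numbering is invertible), and decidability of the local-rule axioms, the last of which deserves the caution you give it since the consistency conditions quantify over disks of radius up to $3r+2$ and one must argue there are only finitely many of them for fixed parameters. Your preliminary remark that the radius-one, label-free reduction is not needed is also correct and matches the paper's (tacit) stance.
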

Notice that local rules are finite tables of graphs. Then order the set of local rules using a product order induced by the total order over graphs introduced in lemma \ref{lem:arithG} and the lexicographic order over finite sequences representing the tables.

From these two results, we can deduce the following arithmetisation property of causal graph dynamics:

\begin{corollary}
There is an effective arithmetisation of causal graph dynamics.
\end{corollary}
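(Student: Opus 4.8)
The plan is to transport the arithmetisation of local rules, obtained just above, onto causal graph dynamics through the correspondence of Definition~\ref{def:localizable}. Recall that a causal graph dynamics is precisely a global function $F$ induced by some local rule $f$ of some radius $r$ via $F(X)=\sim\bigcup_{u\in X}u.f(X_u^r)$, and conversely every local rule induces such a dynamics. So the data of a causal graph dynamics can always be presented by the data of one of its local rules, and I would define the arithmetisation of causal graph dynamics by pulling back, along this surjective correspondence, the arithmetisation of local rules.

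Concretely I would proceed in three steps. First, note that each local rule of parameters $(|\pi|,\Sigma,r,b)$ is a finite object — a finite table of graphs indexed by the disks of radius $r$, together with its parameters — so the set of all local rules is a computable union of finite sets and, by the preceding lemma, receives an effective arithmetisation, effective in both the coding and the decoding directions. Second, assign to a causal graph dynamics $F$ the code(s) of the local rule(s) inducing it; since at least one such rule exists, every causal graph dynamics gets at least one natural‑number code, yielding an effective surjection $\mathbb{N}\to\{\text{causal graph dynamics}\}$. Third, check effectiveness of decoding: given $n\in\mathbb{N}$, decode it to a local rule $f$ (effective by the lemma), and recover $F$ as the localizable function induced by $f$; this presentation of $F$ is effective, and moreover the action of $F$ on any finite graph is computable, by the result of \cite{ArrighiCayleyNesme} recalled in Section~\ref{sec:simulations}. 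Conversely any concrete presentation of a causal graph dynamics carries a local rule, whose code is computable. This establishes the corollary.

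The only delicate point — and the main obstacle to a fully canonical statement — is that the correspondence local rules $\to$ causal graph dynamics is not injective: distinct local rules, for instance of different radii, may induce the very same global function. One could try to repair this by defining the code of $F$ to be the least code among all local rules inducing it, but deciding whether two local rules induce the same dynamics is a global, infinitary comparison which is not obviously decidable, so such a canonical code need not be computable from an arbitrary presentation. For the corollary this is harmless: an \emph{effective} arithmetisation only requires that every dynamics has a code and that decoding is effective, which is exactly what the presentation‑based assignment provides. If a canonical representative is nevertheless wanted, one fixes the convention ``smallest radius, then smallest table'' and accepts that the associated normalisation is effective only relative to a presentation already given by a local rule.
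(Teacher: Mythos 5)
Your argument is essentially the paper's own: the corollary is stated without an explicit proof, the intended justification being exactly the transport of the arithmetisation of local rules (and graphs) onto causal graph dynamics via the localizable-function correspondence of Definition~\ref{def:localizable}. Your additional discussion of the non-injectivity of the map from local rules to dynamics is a correct and welcome clarification of a point the paper leaves implicit, and your conclusion that presentation-based coding suffices for effectiveness is sound.
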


As a consequence, we can state:

\begin{theorem}
Causal graph dynamics form a universal programming system (in the sense of Rogers).
\end{theorem}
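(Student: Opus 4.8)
The plan is to verify the three conditions that, by Rogers' characterisation, make a numbering of the partial computable functions an \emph{acceptable} one --- equivalently, a universal programming system: adequacy (the numbering lists exactly the partial computable functions), the existence of a universal partial computable function (a \textsc{utm}-theorem), and the $s$-$m$-$n$ (parametrisation) property. Once these hold, Rogers' isomorphism theorem gives that the system is recursively isomorphic to the standard one, which is what the statement asserts.

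First I would fix the data. By the preceding Corollary we have an effective enumeration $(F_n)_{n\in\mathbb{N}}$ of all causal graph dynamics, and by Lemma~\ref{lem:arithG} an effective arithmetisation $G\mapsto\langle G\rangle$ of (finitely supported) graphs together with its inverse. Combining these with the halting convention implicit in Theorem~2 --- a computation of $F$ on $G$ ``outputs'' the arithmetisation of the first configuration in which a fixed distinguished state occurs --- assigns to each $n$ a partial function $\varphi_n:\mathbb{N}\to\mathbb{N}$, and $(\varphi_n)_{n\in\mathbb{N}}$ is the candidate programming system. Adequacy then has two halves. Each $\varphi_n$ is partial computable because, as recalled in the remark on locally computable transformations (see~\cite{ArrighiCayleyNesme}), a single application of a CGD to a finite graph is computable, so one decodes $n$ and $x$, iterates $F_n$, and reads off the answer when the convention fires. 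Conversely, every partial computable function is some $\varphi_n$: by hereditary (Turing) universality CGD simulate all Turing machines, hence, transporting the simulation through the graph arithmetisation, every partial computable function is computed by some CGD.

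The universal function $(n,x)\mapsto\varphi_n(x)$ is partial computable by exactly the same decode-then-iterate procedure; in fact, by the universal construction machine of Section~\ref{sec:construction} together with the intrinsically universal family $(f_d)$, it is realised within the CGD model itself. For $s$-$m$-$n$, given $n$ and a parameter $y$ one builds, uniformly and computably, a CGD $F_{s(n,y)}$ that on input $\langle G_x\rangle$ first writes the code of the pair $\langle y,x\rangle$ --- using the linear-graph encoding of Section~\ref{ssec:encodingconstruction} --- and then runs $F_n$ on it; outputting the index of this machine via the arithmetisation yields a total computable $s$ with $\varphi_{s(n,y)}(x)=\varphi_n(\langle y,x\rangle)$. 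With adequacy, the universal function, and $s$-$m$-$n$ in hand, Rogers' characterisation gives that $(\varphi_n)$ is a universal programming system.

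The step I expect to be the real obstacle is not any of these three verifications individually but making the halting/output convention precise enough that ``the partial function computed by a CGD'' is well defined, and then checking that the $s$-$m$-$n$ construction provably respects it --- i.e.\ that prefixing the parameter $y$ can be carried out by a CGD without prematurely creating, or destroying, the occurrence of the distinguished state that signals termination. The computability of the universal function for \emph{infinite} input graphs is also delicate, but the standard move of restricting inputs to those with finite support (so that Lemma~\ref{lem:arithG} applies and each step of the dynamics touches only finitely much of the graph) sidesteps it and keeps everything within the scope of the results already proved above.
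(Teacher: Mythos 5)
Your proposal is sound in outline, but it is worth knowing that the paper offers essentially no proof of this theorem: it is stated as an immediate consequence of the arithmetisation corollary (effective total orderings of graphs, local rules, and hence of CGD), with none of the Rogers conditions actually checked. Your route --- verifying adequacy, the existence of a universal function, and $s$-$m$-$n$, then invoking Rogers' characterisation --- is the standard and correct way to establish acceptability, and it is strictly more informative than what the paper does: an effective numbering alone does not make a programming system universal (one can effectively enumerate the primitive recursive functions, for instance), so the paper's ``as a consequence'' elides exactly the content you supply. Your use of hereditary universality for surjectivity onto the partial computable functions, of local computability of one-step evolution for the converse inclusion and for the universal function, and of a prefixing dynamics for $s$-$m$-$n$ all draw on results the paper does prove, so the ingredients are available.

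The one point you flag yourself is indeed the genuine obstruction, and it is a gap in the paper rather than in your argument: nowhere does the paper fix an input/output convention turning a CGD into a partial function on $\mathbb{N}$ (Theorem~2's state-occurrence predicate gives a halting criterion but not an output value), and without such a convention neither adequacy nor $s$-$m$-$n$ is even a well-posed claim. Your proposed convention (output the arithmetisation of the first configuration in which a distinguished state occurs) is a reasonable repair, and your caveat that the $s$-$m$-$n$ prefixing step must not spuriously trigger that state is exactly the detail one would have to check. So: same theorem, but your proof is the one the paper should have written; the paper's own derivation, read literally, only establishes that CGD admit an effective Gödel numbering, not that this numbering is acceptable.
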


\section{Conclusion and open questions}\label{sec:concl}

In this paper, we provide three different simulations, all achieving universality of causal graph dynamics. Two constructions are detailed; a construction of an universal constructing machine, and a construction on an intrinsically universal family of local rules.
 In the latter, all the local rules of this family act on the same set of graphs and only differ in their radius. This construction is in no manner optimal, and can still be optimized in various ways:
\begin{itemize}
\item[$\bullet$] One could achieve a similar result with a construction on graphs of smaller degree, and with a smaller label set,
\item[$\bullet$] The time-complexity of the simulation can probably be decreased by optimizing the structure of the graph encoding and the local rule encoding. For instance one could imagine using a set structure to encode the local rule, changing a linear access time (in the number of possible neighbourhoods) in a logarithmic access time.
\end{itemize}
Moreover, it seems that this is the best result we can achieve with this kind of construction, as it seems impossible to construct a unique intrinsically universal rule. Nevertheless, we provided a uniformisation of the family, which would allow us to define complexity classes of languages that correspond to machine based complexity classes. It also attests that the model can be efficiently described.

We also provided an arithmetisation of the model, and thus obtained the fact that causal graph dynamics form intrinsically a universal programming system in the sense of Rogers. This is a new step of intrinsic computability of CGD.

The model of causal graph dynamics is thus a convenient model of computation, in the sense of computability theory.

\bibliography{biblio}
\bibliographystyle{eptcs.bst}

\end{document}